\documentclass[11pt]{article}
\usepackage[latin1]{inputenc}
\usepackage{graphicx}
\usepackage{epsfig}
\usepackage{color}
\usepackage{vmargin}
\usepackage{multirow}
\usepackage{lscape}
\usepackage{verbatim}
\usepackage{amsthm, amssymb}
\usepackage{amsmath}
\newtheorem{Lem}{Lemma}
\newtheorem{theorem}{Theorem}

\def\dist{\operatorname{dist}}
\def\bdist{\operatorname{bdist}}
\def\refine{\operatorname{refine\_dist}}
\def\refinefurther{\operatorname{refine\_further}}
\def\even{\operatorname{even}}

\def\comb{\operatorname{comb}}
%\algsetup{indent=2em}

%\setpapersize{A4}
%\setmarginsrb{1in}{1in}{1in}{1in}{0pt}{0mm}{0pt}{0mm}

\title{Approximate Distance Oracles with Improved Query Time}
\author{Christian Wulff-Nilsen
        \footnote{Department of Mathematics and Computer Science,
                  University of Southern Denmark,
                  \texttt{koolooz@diku.dk},
                  \texttt{http://www.imada.sdu.dk/$_{\widetilde{~}}$cwn/}.}}

\date{}
\begin{document}

\maketitle
\begin{abstract}
Given an undirected graph $G$ with $m$ edges, $n$ vertices, and non-negative edge weights, and given an integer $k\geq 2$, we show
that a $(2k-1)$-approximate distance oracle for $G$ of size $O(kn^{1 + 1/k})$ and with $O(\log k)$ query time can be constructed in
$O(\min\{kmn^{1/k},\sqrt km + kn^{1 + c/\sqrt k}\})$ time for some constant $c$.
This improves the $O(k)$ query time of Thorup and Zwick.
Furthermore, for any $0 < \epsilon \leq 1$, we give an oracle of size $O(kn^{1 + 1/k})$ that
answers $((2 + \epsilon)k)$-approximate distance queries in $O(1/\epsilon)$ time.
At the cost of a $k$-factor in size, this improves the $128k$ approximation achieved by the constant query time oracle of
Mendel and Naor and approaches the best possible tradeoff between size and stretch, implied by a widely believed girth conjecture
of Erd\H{o}s. We can match the $O(n^{1 + 1/k})$ size bound of Mendel and Naor for any constant $\epsilon > 0$ and
$k = O(\log n/\log\log n)$.
\end{abstract}
\newpage

\section{Introduction}\label{sec:Intro}
The practical need for efficient algorithms to answer shortest path (distance) queries in graphs has increased significantly over
the years, in large part due to emerging GPS navigation technology and other route planning software.
Classical algorithms like Dijkstra do not scale well as they may need to explore the entire
graph just to answer a single query. As road maps are typically of considerable size, developing more efficient algorithms and data
structures has received a great deal of attention from the research community.

A distance oracle is a data structure that answers shortest path distance queries between vertex pairs in time independent of the size of the graph. A naive way
of achieving this is to precompute and store all-pairs shortest path distances in a look-up table, allowing subsequent
queries to be answered in
constant time. The obvious drawback is of course the huge space requirement which is quadratic in the number of vertices of the graph,
as well as the long time for precomputing all-pairs shortest path distances.

It is not difficult to see that quadratic space is necessary for constant
query time. It is therefore natural to consider \emph{approximate} distance
oracles where some error in the reported distances is allowed. We say that
an approximate distance $\tilde{d}_G(u,v)$ between two vertices $u$ and $v$
in a graph $G$ is of \emph{stretch} $\delta\geq 1$ if $d_G(u,v)\leq\tilde{d}_G(u,v)\leq\delta d_G(u,v)$, where $d_G(u,v)$ denotes the
shortest path distance in $G$ between $u$ and $v$. Awerbuch et
al.~\cite{Awerbuch98} gave for any integer $k\geq 1$ and a graph with $m$ edges and $n$ vertices
a data structure with stretch $64k$, space
$\tilde{O}(kn^{1 + 1/k})$, and preprocessing $\tilde{O}(mn^{1/k})$. Its query
time is $\tilde{O}(kn^{1/k})$ and therefore not independent of the size of
the graph. Stretch was improved to $2k + \epsilon$ by Cohen~\cite{Cohen98} and further to $2k - 1$ by Matou\v{s}ek~\cite{Matousek96}.

In the seminal paper of Thorup and Zwick~\cite{ThorupZwick}, it was shown
that a data structure of size $O(kn^{1 + 1/k})$ can be constructed in
$O(kmn^{1/k})$ time which reports shortest path distances stretched by a factor of at most $2k - 1$ in $O(k)$ time. Since its query time is independent of
the size of the graph (when $k$ is), we refer to it as an approximate distance
oracle. The tradeoff between size and stretch is optimal up to a factor of $k$ in space, assuming a widely believed
and partially proved girth conjecture of Erd\H{o}s~\cite{Erdos}.

Time and space in~\cite{ThorupZwick} are expected bounds; Roditty Thorup, and Zwick~\cite{DetOracleSpanner} gave a deterministic
oracle with only a small increase in preprocessing.

Baswana and Kavitha~\cite{APASP} showed how to obtain $O(n^2)$ preprocessing for $k\geq 3$, an improvement for dense graphs.
Subquadratic time was recently obtained for $k\geq 6$ and $m = o(n^2)$~\cite{SubquadraticOracleCWN}. P\u{a}tra\c{s}cu and
Roditty~\cite{Patrascu} gave an oracle of size $O(n^2/\alpha^{1/3})$ and stretch $2$ for a graph with $m = n^2/\alpha$ edges.
Furthermore, they showed that a size $O(n^{5/3})$ oracle with multiplicative stretch $2$ and additive stretch $1$ exists for unweighted
graphs. Baswana, Gaur, Sen, and Upadhyay~\cite{SubquadraticOracle} also gave oracles with both multiplicative and additive stretch.

Although the oracles above answer queries in time independent of the graph size, query time still depends on
stretch. Mendel and Naor~\cite{MendelNaor} asked the question of whether good approximate distance oracles exist with query time
bounded by a universal constant. They answered this in the affirmative by giving an oracle of size $O(n^{1 + 1/k})$, stretch at
most $128k$, query time $O(1)$ and preprocessing time $O(n^{2 + 1/k}\log n)$. Combining results of Naor and Tao~\cite{NaorTao} with Mendel and 
Naor~\cite{MendelNaor}
improves stretch to roughly $33k$; according to Naor and Tao, with a more
careful analysis of the arguments in~\cite{MendelNaor}, it should be possible
to further improve stretch to roughly $16k$ but not by much more.
The $O(n^{2 + 1/k}\log n)$ preprocessing time was later improved by Mendel and Schwob~\cite{CKR} to $O(mn^{1/k}\log^3n)$; for an
$n$-point metric space, they obtain a bound of $O(n^2)$.\footnote{I thank an anonymous referee for mentioning this improvement.}

We refer the reader to the survey by Sen~\cite{DistOracleSpannerSurvey} on distance oracles as well as the related area of spanners.

\paragraph{Our contributions:}
Our first contribution is an improvement of the query time of the Thorup-Zwick oracle from $O(k)$ to $O(\log k)$ without
increasing space, stretch, or preprocessing time. We achieve this by showing how to apply binary search on the bunch-structures,
introduced by Thorup and Zwick. Our improved query algorithm is very simple to describe and straightforward to implement.
It can easily be incorporated into our recent distance oracle~\cite{SubquadraticOracleCWN}, giving improved
preprocessing.

Our second contribution is an approximate distance oracle with universally constant query time whose size is $O(kn^{1 + 1/k})$
and whose stretch can be made arbitrarily close to the optimal $2k - 1$ (when $k = \omega(1)$): for any positive
$\epsilon\leq 1$, we give an oracle of size $O(kn^{1 + 1/k})$, stretch $O((2 + \epsilon)k)$,
and query time $O(1/\epsilon)$. For $k = O(\log n/\log\log n)$ and constant $\epsilon$, space can be improved to
$O(n^{1 + 1/k})$, matching that of Mendel and Naor\footnote{This covers almost all values of $k$ that are of interest
as the Mendel-Naor oracle has $O(n)$ space requirement for $k = \Omega(\log n)$.}.
To achieve this result, the main idea is to first query the Mendel-Naor oracle to get an $O(k)$-approximate distance
and then refine this estimate in $O(1/\epsilon)$ iterations using the
bunch-structures of Thorup and Zwick. Our results are summarized in Table~\ref{tab:Oracles}.

Note that we are interested in non-constant $k$ only; if $k = O(1)$, the Thorup-Zwick oracle is optimal up to constants
(assuming the girth conjecture) since it has size $O(n^{1 + 1/k})$, stretch $2k - 1$, and query time $O(1)$.

\begin{table*}
\begin{center}
\begin{tabular}{|c|c|c|c|c|}
\hline
Stretch        & Query time & Space      & Preprocessing time      & Reference\\
\hline
$2k-1$ & $O(k)$ & $O(kn^{1 + \frac 1 k})$ & $O(\min\{kmn^{\frac 1 k},\sqrt km + kn^{1 + \frac c{\sqrt k}}\})$ & \cite{ThorupZwick, SubquadraticOracleCWN}\\
\hline
$2k-1$ & $O(\log k)$ & $O(kn^{1 + \frac 1 k})$ & $O(\min\{kmn^{\frac 1 k},\sqrt km + kn^{1 + \frac c{\sqrt k}}\})$ & This paper\\
\hline
$128k$ & $O(1)$ & $O(n^{1 + \frac 1 k})$ & $O(mn^{\frac 1 k}\log^3n)$ & ~\cite{MendelNaor,CKR}\\
\hline
$(2 + \epsilon)k$ & $O(\frac {\log C}\epsilon)$ & $O(kn^{1 + \frac 1 k})$ &
$O(kmn^{\frac 1 k} + kn^{1+\frac 1k}\log n + mn^{\frac 1{Ck}}\log^3n)$ & This paper\\
\hline
\end{tabular}
\end{center}
\caption{Performance of distance oracles in weighted undirected graphs.}\label{tab:Oracles}
\end{table*}

\paragraph{Organization of the paper:}
In Section~\ref{sec:Prelim}, we introduce notation and give some basic definitions and results.
Our oracle with $O(\log k)$ query time is presented in Section~\ref{sec:FasterQuery}. This is followed by
our constant time oracle in Section~\ref{sec:ConstQuery}; first we present a generic algorithm in
Section~\ref{subsec:Generic} that takes as input a large-stretch distance estimate and outputs a refined
estimate. Some technical results are presented in Section~\ref{subsec:MendelNaor} that will allow us to combine
this generic algorithm with the Mendel-Naor oracle to form our own oracle. We describe preprocessing and query in detail
in Sections~\ref{subsec:Preproc} and~\ref{subsec:Query} and we bound time and space requirements in
Section~\ref{subsec:TimeSpace}. In Section~\ref{subsec:FasterPreproc}, we show how to improve preprocessing compared
to that in~\cite{CKR}.
Finally, we conclude in Section~\ref{sec:ConclRem}.
%Proofs of two of the lemmas have been moved to the appendix.

\section{Preliminaries}\label{sec:Prelim}
Throughout the paper, $G = (V,E)$ is an undirected connected graph with non-negative edge weights and with $m$ edges and $n$
vertices. For $u,v\in V$, we denote by $d_G(u,v)$ the shortest path distance between $u$ and $v$.

Sometimes we consider list representations of sets. We denote by $S[i]$ the $i$th entry of some chosen
list representation of a set $S$, $i\geq 0$. For $x > 0$, $\log x$ is the base $2$ logarithm of $x$.

The following definitions are taken from~\cite{ThorupZwick} and we shall use them throughout the paper.
Let $k\geq 1$ be an integer and form sets $A_0,\ldots,A_k$ with
$V = A_0\supseteq A_1\supseteq A_2\ldots\supseteq A_k = \emptyset$. For $i = 1,\ldots,k-1$, set $A_i$ is formed by
picking each element of $A_{i-1}$ independently with probability $n^{-1/k}$. Set $A_i$ has expected size $O(n^{1 - i/k})$ for
$i = 0,\ldots,k-1$. For each vertex $u$ and each $i = 1,\ldots,k-1$,
$p_i(u)$ denotes the vertex of $A_i$ closest to $u$ (breaking ties arbitrarily). Define a \emph{bunch} $B_u$ as
\[
  B_u = \bigcup_{i = 0}^{k - 1}\{v\in A_i\setminus A_{i+1} | d_G(u,v) < d_G(u,p_{i+1}(u))\},
\]
where we let $d_G(u,p_k(u)) = \infty$; see Figure~\ref{fig:Bunch}.
\begin{figure}%[!ht]
\centerline{\scalebox{1.2}{\input{Bunch.pstex_t}}}
\caption{A bunch $B_u$ in a complete Euclidean graph with $k = 3$. Black vertices belong to
$A_0$, grey vertices to $A_1$, and white vertices to $A_2$. Line segments connect $u$ to
vertices of $B_u$.}
\label{fig:Bunch}
\end{figure}

Thorup and Zwick showed how to compute all bunches in $O(kmn^{1/k})$ time and showed that each of them has expected size
$O(kn^{1/k})$ for a total of $O(kn^{1 + 1/k})$. The following lemma states some simple but important results about bunches.
\begin{Lem}\label{Lem:SimpleProperties}
Let $u,v\in V$ be distinct vertices and let $0\leq i < k - 1$. If $p_i(v)\notin B_u$ then $d_G(u,p_{i+1}(u))\leq d_G(u,p_i(v))$.
Furthermore, $A_{k - 1}\subset B_u$. In particular, $p_{k - 1}(v)\in B_u$.
\end{Lem}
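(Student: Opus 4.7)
The plan is to unpack the definition of the bunch $B_u$ and argue by cases on which level $p_i(v)$ lies in. Write $w := p_i(v)$, so $w\in A_i$ and $A_i = (A_i\setminus A_{i+1})\cup A_{i+1}$.

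For the first statement, I would split on whether $w\in A_{i+1}$. If $w\in A_{i+1}$, then since $p_{i+1}(u)$ is, by definition, the vertex of $A_{i+1}$ closest to $u$, we get $d_G(u,p_{i+1}(u))\le d_G(u,w) = d_G(u,p_i(v))$ immediately, without even using the hypothesis $w\notin B_u$. If instead $w\in A_i\setminus A_{i+1}$, then $w$ is a candidate for membership in $B_u$ through the $i$-th term of the union, and the hypothesis $w\notin B_u$ forces the defining inequality to fail, i.e. $d_G(u,w)\ge d_G(u,p_{i+1}(u))$, which is exactly what we want.

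For the second statement, I would observe that $A_k = \emptyset$, so every $w\in A_{k-1}$ lies in $A_{k-1}\setminus A_k$, contributing to the $(k-1)$-th term of the union defining $B_u$. With the convention $d_G(u,p_k(u)) = \infty$, the defining inequality $d_G(u,w) < d_G(u,p_k(u))$ holds trivially for any $w$ (distances being finite on a connected graph), so $w\in B_u$. Taking $w = p_{k-1}(v)\in A_{k-1}$ yields the last assertion.

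There is no real obstacle here; the entire content is a careful reading of the definitions of $B_u$ and of $p_i(\cdot)$, together with the layering $A_0\supseteq A_1\supseteq\cdots\supseteq A_k$. The only mildly subtle point is noting that the case $w\in A_{i+1}$ uses the minimality property of $p_{i+1}(u)$ rather than the hypothesis $w\notin B_u$, so the two cases are handled by genuinely different arguments even though the conclusion is the same.
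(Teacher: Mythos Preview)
Your proof is correct. The paper actually states this lemma without proof, treating it as an immediate consequence of the definitions of $B_u$ and $p_i(\cdot)$; your case split on whether $p_i(v)\in A_{i+1}$ is exactly the natural way to fill in the omitted details.
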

Algorithm $\dist_k(u,v,i)$ in Figure~\ref{fig:Distk} is identical to the query algorithm of Thorup and Zwick except that we do not
initialize $i\leftarrow 0$ but allow any start value. We shall use this generalized algorithm in our analysis in the following.
\begin{figure}
\begin{tabbing}
d\=dd\=\quad\=\quad\=\quad\=\quad\=\quad\=\quad\=\quad\=\quad\=\quad\=\quad\=\quad\=\kill
\>\texttt{Algorithm} $\dist_k(u,v,i)$\\\\
\>1. \>$w\leftarrow p_i(u)$; $j\leftarrow i$\\
\>2. \>while $w\notin B_v$\\
\>3. \>\>$j\leftarrow j + 1$\\
\>4. \>\>$(u,v)\leftarrow (v,u)$\\
\>5. \>\>$w\leftarrow p_j(u)$\\
\>6. \>return $d_G(w,u) + d_G(w,v)$
\end{tabbing}
\caption{Answering a distance query, starting at sample level $i$.}\label{fig:Distk}
\end{figure}

%fixme dist_k
%fixme Properties of bunches.

\section{Oracle with $O(\log k)$ Query Time}\label{sec:FasterQuery}
In this section, we show how to improve the $O(k)$ query time of the Thorup-Zwick oracle to $O(\log k)$. Let $\mathcal I$ be the
index sequence $0,\ldots,k-1$. The idea is to identify $r = O(\log k)$ subsequences
$(\mathcal I_1 = \mathcal I)\supset\mathcal I_2\supset\ldots\supset\mathcal I_r$ of $\mathcal I$ in that order, where for
$j = 2,\ldots,r$,
$|\mathcal I_j|\leq\frac 1 2 |\mathcal I_{j-1}|$. Each subsequence $\mathcal I_j$ has the property that $\dist_k$ applied to the
beginning of it outputs a desired $(2k-1)$-approximate distance in $O(|\mathcal I_j|)$ time.
We apply binary search to identify the subsequences, with each step taking constant time. The final
subsequence $\mathcal I_r$ has $O(\log k)$ length and $\dist_k$ is applied to it to compute a $(2k-1)$-distance estimate in $O(\log k)$
additional time.

In the following, we define a class of such subsequences.
For vertices $u$ and $v$, an index $j\in\mathcal I$ is \emph{$(u,v)$-terminal} if
\begin{enumerate}
\item $j = k - 1$ (in which case $p_j(u)\in B_v$) or
\item $j < k - 1$ is even and either $p_j(u)\in B_v$ or $p_{j+1}(v)\in B_u$.
\end{enumerate}
Note that if an index $j$ is $(u,v)$-terminal, $\dist_k(u,v,i)$ terminates if it reaches $j$ or $j + 1$.
We say that a subsequence $\mathcal I' = i_1,\ldots,i_2$ of $\mathcal I$ is \emph{$(u,v)$-feasible} if
\begin{enumerate}
\item $i_1$ is even,
\item $d_G(u,p_{i_1}(u))\leq i_1\cdot d_G(u,v)$, and
\item $i_2$ is $(u,v)$-terminal.
\end{enumerate}
The following lemma implies that $\dist_k$ answers a $(2k - 1)$-approximate distance query for $u$ and $v$
when applied to a $(u,v)$-feasible sequence.
\begin{Lem}\label{Lem:FeasibleApprox}
Let $i_1,\ldots,i_2$ be a $(u,v)$-feasible subsequence. Then $\dist_k(u,v,i_1)$ gives a $(2k-1)$-approximate $uv$-distance
in $O(i_2 - i_1)$ time.
\end{Lem}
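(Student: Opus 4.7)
The plan is to adapt the standard Thorup-Zwick stretch argument, which analyzes $\dist_k(u,v,0)$, by reinitializing it at the starting index $i_1$ using the three feasibility conditions. The claim has two halves: the returned value lies between $d_G(u,v)$ and $(2k-1)\,d_G(u,v)$, and the while-loop runs $O(i_2 - i_1)$ times. The lower bound on the return value is immediate from the triangle inequality, so the real content is the upper bound together with the termination count.

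I would first pin down how the alternation in $\dist_k$ interacts with parity. Let $u_j$ denote the first argument at the start of the step in which $w \leftarrow p_j(\cdot)$ is assigned, and let $v_j$ denote the other vertex. Because $i_1$ is even (feasibility condition~1), $u_j = u$ exactly when $j - i_1$ is even, which matches the convention used in the definition of $(u,v)$-terminal. I then prove by induction on $j \geq i_1$ the invariant
\[
  d_G\bigl(u_j, p_j(u_j)\bigr) \leq j\cdot d_G(u,v).
\]
The base case $j = i_1$ is exactly feasibility condition~2. For the inductive step, if the loop does not terminate at step $j$, then $p_j(u_j) \notin B_{v_j}$, and Lemma~\ref{Lem:SimpleProperties} together with the triangle inequality give
\[
  d_G\bigl(v_j, p_{j+1}(v_j)\bigr) \leq d_G\bigl(v_j, p_j(u_j)\bigr) \leq d_G(u,v) + j\cdot d_G(u,v) = (j+1)\,d_G(u,v),
\]
which is the invariant at step $j+1$ after the swap.

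When the loop terminates at some step $j^*$, it returns $d_G(w, u_{j^*}) + d_G(w, v_{j^*})$ with $w = p_{j^*}(u_{j^*})$. The invariant yields $d_G(w, u_{j^*}) \leq j^*\,d_G(u,v)$, and the triangle inequality yields $d_G(w, v_{j^*}) \leq (j^* + 1)\,d_G(u,v)$, so the output is at most $(2j^* + 1)\,d_G(u,v)$. Feasibility condition~3, combined with the parity alignment above, forces $j^* \leq \min(i_2 + 1, k - 1) \leq k - 1$: if $i_2 < k - 1$ is even, one of the two tests at step $i_2$ or $i_2 + 1$ succeeds, while if $i_2 = k - 1$, Lemma~\ref{Lem:SimpleProperties} guarantees $p_{k-1}(u_{k-1}) \in B_{v_{k-1}}$. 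Hence the stretch is at most $2k - 1$ and the loop performs $O(i_2 - i_1)$ iterations. The main subtlety I anticipate is keeping the parities of the alternating roles of $u$ and $v$ consistent with the asymmetric conditions in the definitions of $(u,v)$-terminal and feasibility, which is exactly why condition~1 demands that $i_1$ be even; once that bookkeeping is in place, the rest is the original Thorup-Zwick argument restarted at $j = i_1$.
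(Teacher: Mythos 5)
Your proof is correct and follows essentially the same route as the paper: the standard Thorup--Zwick invariant $d_G(u_j,p_j(u_j))\leq j\cdot d_G(u,v)$, seeded at $j=i_1$ by feasibility condition~2, with termination by step $\min(i_2+1,k-1)$ forced by condition~3. Your version is in fact slightly more careful than the paper's (which compresses the parity bookkeeping and the possible extra iteration at $i_2+1$ into one inequality); the only detail you leave implicit is that each iteration costs $O(1)$ via the hash-table representation of bunches.
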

\begin{proof}
The time bound follows since $i_2$ is $(u,v)$-terminal
and since each iteration can be implemented to run in constant time using
hash tables to represent bunches as in~\cite{ThorupZwick}.
The stretch bound follows from the analysis of Thorup and Zwick for their query algorithm: when $p_j(u) = w\notin B_v$, we have
$d_G(v,p_{j+1}(v))\leq d_G(v,p_j(u))\leq d_G(u,p_j(u)) + d_G(u,v)$ by
Lemma~\ref{Lem:SimpleProperties} and the triangle inequality. Hence,
each iteration of $\dist_k(u,v,i_1)$ increases $d_G(w,u)$ by at most $d_G(u,v)$. Since $d_G(u,p_{i_1}(u))\leq i_1\cdot d_G(u,v)$, we have at termination
that $d_G(u,w) + d_G(w,v)\leq 2d_G(u,w) + d_G(u,v)\leq
(2(i_1 + (i_2 - i_1)) + 1)d_G(u,v)\leq (2k - 1)d_G(u,v)$.
\end{proof}
\begin{Lem}\label{Lem:IFeasible}
$\mathcal I$ is $(u,v)$-feasible for all vertices $u$ and $v$.
\end{Lem}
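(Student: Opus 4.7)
The plan is simply to verify the three defining conditions of $(u,v)$-feasibility directly for the sequence $\mathcal I = 0,1,\ldots,k-1$, taking $i_1 = 0$ and $i_2 = k-1$.

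For condition (1), $i_1 = 0$ is trivially even. For condition (2), since $A_0 = V$, we have $p_0(u) = u$, so $d_G(u,p_{i_1}(u)) = 0 = 0\cdot d_G(u,v) = i_1\cdot d_G(u,v)$, which holds for any pair $u,v$.

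The only mildly nontrivial part is condition (3), namely that $i_2 = k-1$ is $(u,v)$-terminal. This is covered by the first case of the definition of $(u,v)$-terminal, but that case requires $p_{k-1}(u)\in B_v$. I would invoke Lemma~\ref{Lem:SimpleProperties}, which guarantees $A_{k-1}\subset B_v$ for every vertex $v$; since $p_{k-1}(u)\in A_{k-1}$ by definition, we immediately get $p_{k-1}(u)\in B_v$, so $k-1$ is $(u,v)$-terminal.

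There is no real obstacle here; the statement is essentially a sanity check that the initial conditions of Thorup--Zwick's query algorithm fit into the feasibility framework just introduced, with Lemma~\ref{Lem:SimpleProperties} doing the only work.
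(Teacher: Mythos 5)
Your proof is correct and is exactly the direct verification the paper intends (it omits the proof as immediate): $i_1=0$ is even, $p_0(u)=u$ gives condition (2) trivially, and $i_2=k-1$ is $(u,v)$-terminal since $p_{k-1}(u)\in A_{k-1}\subset B_v$ by Lemma~\ref{Lem:SimpleProperties}. Nothing further is needed.
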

For each vertex $u$ and $0 \leq i < k - 2$, define $\delta_i(u) = d_G(u,p_{i+2}(u)) - d_G(u,p_i(u))$.
The following lemma allows us to binary search for a $(2k-1)$-approximate distance estimate of $d_G(u,v)$.
\begin{Lem}\label{Lem:FeasibleRec}
Let $i_1,\ldots,i_2$ be a $(u,v)$-feasible sequence and let $i$ be even, $i_1 + 2\leq i\leq i_2 - 2$.
Let $j$ be an even index in subsequence
$i_1,\ldots,i - 2$ that maximizes $\delta_j(u)$. If $p_j(u)\notin B_v$ and $p_{j+1}(v)\notin B_u$ then $i,\ldots,i_2$ is
$(u,v)$-feasible. Otherwise, $i_1,\ldots,j$ is $(u,v)$-feasible.
\end{Lem}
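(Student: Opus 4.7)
The plan is to handle the two cases separately. The easy case is when $p_j(u)\in B_v$ or $p_{j+1}(v)\in B_u$: then $j$ is $(u,v)$-terminal by definition (noting that $j$ is even and $j \leq i - 2 \leq i_2 - 4 < k - 1$, so the "$j<k-1$ and even" clause of the terminal definition applies). Combined with the given fact that $i_1$ is even and $d_G(u,p_{i_1}(u))\leq i_1\cdot d_G(u,v)$, the subsequence $i_1,\ldots,j$ satisfies all three feasibility conditions.

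The work lies in the other case. Since $i$ is even and $i_2$ is already $(u,v)$-terminal, I only need to verify the distance bound $d_G(u,p_i(u))\leq i\cdot d_G(u,v)$. I would first write a telescoping identity
\[
d_G(u,p_i(u)) \;=\; d_G(u,p_{i_1}(u)) + \sum_{\ell} \delta_\ell(u),
\]
where $\ell$ ranges over the even indices $i_1, i_1+2, \ldots, i-2$, giving $(i-i_1)/2$ terms. Since $j$ was chosen to maximize $\delta_j(u)$ over exactly this set of indices, each term is bounded by $\delta_j(u)$.

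The key step is to show $\delta_j(u)\leq 2\,d_G(u,v)$ using the hypotheses of Case 1. Because $p_j(u)\notin B_v$, Lemma~\ref{Lem:SimpleProperties} gives $d_G(v,p_{j+1}(v))\leq d_G(v,p_j(u))$, which by the triangle inequality is at most $d_G(u,v)+d_G(u,p_j(u))$. Because $p_{j+1}(v)\notin B_u$, Lemma~\ref{Lem:SimpleProperties} gives $d_G(u,p_{j+2}(u))\leq d_G(u,p_{j+1}(v))\leq d_G(u,v)+d_G(v,p_{j+1}(v))$. Chaining these two bounds yields
\[
d_G(u,p_{j+2}(u)) \;\leq\; 2\,d_G(u,v) + d_G(u,p_j(u)),
\]
so $\delta_j(u)\leq 2\,d_G(u,v)$ as claimed.

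Plugging this back into the telescoping bound gives $d_G(u,p_i(u))\leq d_G(u,p_{i_1}(u)) + (i-i_1)\,d_G(u,v)\leq i\cdot d_G(u,v)$, completing the feasibility check for $i,\ldots,i_2$. The only subtle point in the argument is recognizing that applying Lemma~\ref{Lem:SimpleProperties} twice---once to the $u$-side failure and once to the $v$-side failure---plus two uses of the triangle inequality are exactly what is needed to convert "neither side advances" into the two-unit growth bound on $\delta_j(u)$; this is the core obstacle, and everything else is bookkeeping.
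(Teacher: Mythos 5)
Your proof is correct and follows essentially the same route as the paper's: the terminal-index observation for the easy case, then two applications of Lemma~\ref{Lem:SimpleProperties} and two triangle inequalities to get $\delta_j(u)\leq 2d_G(u,v)$, followed by the telescoping sum over the even indices $i_1,\ldots,i-2$. The index-range check ($j\leq i_2-4<k-1$) you add is a harmless extra detail the paper leaves implicit.
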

\begin{proof}
If $p_j(u)\in B_v$ or $p_{j+1}(v)\in B_u$ then $j$ is $(u,v)$-terminal. Since $i_1,\ldots,i_2$ is $(u,v)$-feasible, so is
$i_1,\ldots,j$.

Now assume that $p_j(u)\notin B_v$ and $p_{j+1}(v)\notin B_u$. Then $d_G(v,p_{j+1}(v))\leq d_G(v,p_j(u))$ and
$d_G(u,p_{j+2}(u))\leq d_G(u,p_{j+1}(v))$ by Lemma~\ref{Lem:SimpleProperties}.
%fixme Maybe in Defs/notation Section.
Applying the triangle inequality twice yields
\begin{align*}
  d_G(u,p_{j+2}(u)) & \leq d_G(u,p_{j+1}(v))\\
                   & \leq d_G(u,v) + d_G(v,p_{j+1}(v))\\
                   & \leq d_G(u,v) + d_G(v,p_j(u))\\
                   & \leq 2d_G(u,v) + d_G(u,p_j(u))
\end{align*}
so $\delta_j(u) = d_G(u,p_{j+2}(u)) - d_G(u,p_j(u))\leq 2d_G(u,v)$.

Let $\mathcal I'$ be the set of even indices $i_1,i_1 + 2, i_1 + 4,\ldots,i - 2$.
Since $i_1,\ldots,i_2$ is $(u,v)$-feasible, $d_G(u,p_{i_1}(u))\leq i_1\cdot d_G(u,v)$. By the choice of $j$,
\begin{align*}
  d_G(u,p_i(u)) &    = d_G(u,p_{i_1}(u)) + \sum_{j'\in\mathcal I'}\delta_{j'}(u)\\
               &  \leq i_1\cdot d_G(u,v) + |\mathcal I'|\max_{j'\in\mathcal I'}\delta_{j'}(u)\\
                &    = i_1\cdot d_G(u,v) + \frac{i - i_1}{2}\delta_j(u)\\
                & \leq i_1\cdot d_G(u,v) + (i - i_1)d_G(u,v)\\
                &    = i\cdot d_G(u,v).
\end{align*}
Hence, since $i_1,\ldots,i_2$ is $(u,v)$-feasible, so is $i,\ldots,i_2$.
\end{proof}
We can now show our first main result.
\begin{figure}
\begin{tabbing}
d\=dd\=\quad\=\quad\=\quad\=\quad\=\quad\=\quad\=\quad\=\quad\=\quad\=\quad\=\quad\=\kill
\>\texttt{Algorithm} $\bdist_k(u,v,i_1,i_2)$\\\\
\>1. \>if $i_2 - i_1\leq\log k$ then return $\dist_k(u,v,i_1)$\\
\>2. \>let $i$ be the middle even index in $i_1,\ldots,i_2$\\
\>3. \>let $j$ be the (precomputed) even index in
       $i_1,\ldots,i-2$ maximizing $\delta_j(u)$\\
\>4. \>if $p_j(u)\notin B_v$ and $p_{j+1}(v)\notin B_u$ then
       return $\bdist_k(u,v,i,i_2)$\\
\>5. \>else return $\bdist_k(u,v,i_1,j)$
\end{tabbing}
\caption{Answering a distance query using binary search. The initial call
is $\bdist_k(u,v,0,k-1)$. For correctness of the pseudocode, we assume here that $k\geq 16$. The call in line $1$ is to $\dist_k$ in Figure~\ref{fig:Distk}.}\label{fig:BDistk}
\end{figure}
\begin{theorem}\label{Thm:LogkQuery}
For an integer $k\geq 2$, a $(2k-1)$-approximate
distance oracle of $G$ of size $O(kn^{1 + 1/k})$ and $O(\log k)$ query time can be constructed in
$O(\min\{kmn^{1/k},\sqrt km + kn^{1 + c/\sqrt k}\})$ time for some constant $c$.
\end{theorem}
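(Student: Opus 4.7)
The plan is to invoke the Thorup--Zwick preprocessing (or the faster preprocessing from~\cite{SubquadraticOracleCWN} when it applies) to build, for every $u\in V$ and every $i=0,\ldots,k-1$, the pivot $p_i(u)$, the distance $d_G(u,p_i(u))$, and a hash-table representation of the bunch $B_u$. By~\cite{ThorupZwick,SubquadraticOracleCWN}, this fits in $O(kn^{1+1/k})$ space and the preprocessing time stated in the theorem. Correctness and space of the oracle will then reduce to analyzing the query $\bdist_k(u,v,0,k-1)$ of Figure~\ref{fig:BDistk}, and to bounding the cost of the auxiliary data needed to implement line~3 in $O(1)$ time.

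For correctness I would show, by induction on $i_2-i_1$, that every recursive call $\bdist_k(u,v,i_1,i_2)$ is made on a $(u,v)$-feasible subsequence $i_1,\ldots,i_2$. The base case $\mathcal I_1=\mathcal I$ is Lemma~\ref{Lem:IFeasible}. For the inductive step, Lemma~\ref{Lem:FeasibleRec} guarantees that whichever branch the algorithm takes in lines~4--5 yields a $(u,v)$-feasible subsequence, so the invariant is preserved. Once the recursion halts (in line~1), the call to $\dist_k(u,v,i_1)$ is applied to a $(u,v)$-feasible subsequence of length at most $\log k$, and Lemma~\ref{Lem:FeasibleApprox} then immediately delivers a $(2k-1)$-approximation.

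For the query time, note that each recursive step replaces the even-index range of length $\ell=i_2-i_1$ by one of length at most $\lceil\ell/2\rceil$, so after $O(\log k)$ calls we reach the base case; each non-base call does $O(1)$ work provided the argmax index $j$ in line~3 is available in $O(1)$. The base case spends $O(\log k)$ time by the already-proved time bound in Lemma~\ref{Lem:FeasibleApprox}, giving the desired $O(\log k)$ query time.

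The main obstacle, and the step I expect to require the most care, is supporting line~3 without blowing up space. The key observation I would use is that the binary-search tree of possible subranges $(i_1,i_2)$ depends only on $k$, not on the query $(u,v)$; it is a fixed balanced binary tree with $O(k)$ nodes, and each node corresponds to a contiguous range of even indices in $\{0,\ldots,k-1\}$. During preprocessing, for every vertex $u$ I would traverse this tree bottom-up and store at each node the even index $j$ that maximizes $\delta_j(u)$ over the range assigned to that node; combining the two children's argmax values takes $O(1)$, so this costs $O(k)$ time and $O(k)$ space per vertex, for a total of $O(nk)$, which is absorbed in $O(kn^{1+1/k})$ and in the stated preprocessing bound. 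With this table, line~3 is a single lookup, completing the proof.
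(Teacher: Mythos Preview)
Your correctness and query-time arguments are right and match the paper's. The gap is in your handling of line~3. You assert that ``the binary-search tree of possible subranges $(i_1,i_2)$ depends only on $k$, not on the query $(u,v)$.'' This is false: when $\bdist_k$ takes the branch in line~5 and recurses on $(i_1,j)$, the index $j$ is the argmax of $\delta_j(u)$ over $i_1,\ldots,i-2$ and therefore depends on $u$. Hence the range at the next recursion level---and the range over which the \emph{next} argmax in line~3 must be taken---is $u$-dependent and is in general not a node of any fixed balanced tree on $\{0,\ldots,k-1\}$. A single lookup into a per-$u$ table indexed by a fixed canonical tree therefore does not suffice once the algorithm has taken the line~5 branch even once.

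The paper repairs exactly this point. It does build the fixed canonical tree you describe and computes, for each $u$, the argmax at every canonical node in $O(k)$ time bottom-up (your idea). It then observes that an arbitrary range $i_1,\ldots,i-2$ decomposes into $O(\log k)$ canonical ranges, so its argmax can be recovered in $O(\log k)$ time. The crucial extra observation is that, because the preprocessing recursion for $u$ halts at ranges of length at most $\log k$, only $O(k/\log k)$ ranges ever need a $j$-index; computing each in $O(\log k)$ keeps the total at $O(k)$ per vertex. These precomputed $j$-indices (for the $u$-dependent recursion tree) are what make line~3 an $O(1)$ lookup at query time. Your proposal is essentially the paper's first step; you are missing the second.
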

\begin{proof}
In order for $\delta_i(u)$-values to be defined, we assume that
$k\geq 3$; the result of the theorem is already known for $k = 2$ (in fact
for any constant $k$).
We obtain bunch $B_u$ for each vertex $u$ in a total of $O(kmn^{1/k})$ time using the Thorup-Zwick construction.
The following additional preprocessing is done for $u$ to determine the $(u,v)$-subsequences of $\mathcal I$ that are needed. Let
$\mathcal I' = i_1,\ldots,i_2$ be the current sequence considered; initially, $\mathcal I' = \mathcal I$. Pick an even index
$i$, $i_1 + 2\leq i\leq i_2 - 2$, such that $i_1,\ldots,i$ and $i,\ldots,i_2$ have (roughly) the same size and
find an even index $j$ in $i_1,\ldots,i - 2$ which maximizes $\delta_j(u)$. Then recurse on subsequences
$i_1,\ldots,j$ and $i,\ldots,i_2$. The recursion stops when a sequence of length at most $\log k$  is reached.
Below we show that these indices $j$ can be identified in $O(k)$ time which is $O(kn)$ over all $u$.

Now, to answer a distance query for vertices $u$ and $v$, we do binary search on sequences
$\mathcal I' = i_1,\ldots,i_2$ generated; see Figure~\ref{fig:BDistk}. We start the search with $\mathcal I' = \mathcal I$ and check if
both $p_j(u)\notin B_v$ and $p_{j+1}(v)\notin B_u$. If so, we continue the search on subsequence $i,\ldots,i_2$.
Otherwise, we continue the search on $i_1,\ldots,j$. We stop when reaching a sequence of length at most $\log k$.
By Lemmas~\ref{Lem:IFeasible} and~\ref{Lem:FeasibleRec}, this subsequence is $(u,v)$-feasible.
Applying $\dist_k$ to it outputs a $(2k-1)$-approximate distance estimate of $d_G(u,v)$ by
Lemma~\ref{Lem:FeasibleApprox}.

Binary search takes $O(\log k)$ time. Since we end up with a $(u,v)$-feasible sequence of length at most $\log k$, $\dist_k$
applied to it takes $O(\log k)$ time. Hence, query time is $O(\log k)$.

The oracle in~\cite{SubquadraticOracleCWN} with $O(\sqrt km + kn^{1 + c/\sqrt k})$ preprocessing time
also constructs bunches and applies linear search in these to answer distance
queries in $O(k)$ time. Our binary search algorithm can immediately be plugged in instead.

It remains to bound, for each vertex $u$, the time to identify the indices $j$. Since sequence lengths are reduced by a factor of at least two in each
recursive step, simple linear searches will give all the indices in a total of $O(k\log k)$ time. In the following, we improve this to $O(k)$.

Let us call a subsequence of $\mathcal I$ \emph{canonical} if it is obtained during the following procedure: start with the subsequence
$\mathcal I'$ of $\mathcal I$ consisting of the even indices. Then find an index $i\in\mathcal I'$ that partitions $\mathcal I'$ into two
(roughly) equal-size subsequences (both containing $i$), and recurse on each of
them; the recursion stops when a subsequence consisting of two indices is obtained. We keep a binary tree $\mathcal T$ reflecting the recursion,
where each node of $\mathcal T$ is associated with the canonical subsequence generated at that step in the recursion.
From this procedure, we identify (the endpoints of) all canonical subsequences in $O(k)$ time.
A bottom-up $O(k)$ time algorithm in $\mathcal T$ can then identify, for each canonical subsequence $\mathcal I' = i_1,i_1+2\ldots,i_2$, an
index $j = j(\mathcal I')$ in $i_1,i_1+2,\ldots,i_2 - 2$ that maximizes $\delta_j(u)$.

Now consider a (not necessarily canonical) subsequence $\mathcal I' = i_1,i_1 + 2,\ldots,i_2$ of $\mathcal I$ with indices $i_1 < i_2$ even. We can
find $O(\log k)$ canonical subsequences whose union is $\mathcal I'$ as follows: let $\ell_1$ and $\ell_2$ be the leaves
of $\mathcal T$ associated with canonical subsequences $i_1,i_1+2$ and $i_2 - 2,i_2$, respectively. Let $P$ be the path in $\mathcal T$ from the
parent of $\ell_1$ to the parent of $\ell_2$ and let $X$ be the set of nodes in $\mathcal T\setminus P$ having a parent in $P$. Then it is
easy to see that the $O(\log k)$ canonical subsequences associated with nodes in $X$ have $\mathcal I'$ as their union.
It follows that finding the desired index $j$ for $\mathcal I'$ takes $O(\log k)$ time as it can be found among the $j$-indices for canonical
subsequences associated with nodes in $X$.

In our preprocessing for vertex $u$, we only need to find $j$-indices for $O(k/\log k)$ subsequences since
the recursion stops when a subsequence of length at most $\log k$ is found.
Total preprocessing for $u$ is thus $O(k)$, as desired.
\end{proof}

\section{Oracle with Constant Query Time}\label{sec:ConstQuery}
Let $0 < \epsilon \leq \frac 1 2$ be given. In this section, we show how to achieve stretch $2(1 + \epsilon)k - 1$, query time
$O(1/\log(1 + \epsilon)) = O(1/\epsilon)$\footnote{Let $x = 1/\epsilon\geq 2$.
Since $\ln$ is concave, $\ln(1 + \epsilon) = \ln(x + 1) - \ln x > \frac{\partial}{\partial x}\ln(x + 1) = 1/(x + 1)\geq
\frac 2 3\epsilon$, which implies
$1/\log(1 + \epsilon) = O(1/\epsilon)$.},
and space $O(kn^{1 + 1/k})$. Initially, we aim for a preprocessing bound of $O(n^{2 + 1/k}\log n)$, matching that
in~\cite{MendelNaor}. In Section~\ref{subsec:FasterPreproc}, we improve this to the bound stated in
Table~\ref{tab:Oracles}.

We start with a generic algorithm, $\refine$, to refine a distance
estimate. Later we will show how to combine this with the Mendel-Naor oracle.
We shall assume that $1/\log(1 + \epsilon) = o(\log k)$ since otherwise, the oracle of the previous section can be applied.

\subsection{A generic algorithm}\label{subsec:Generic}
For a vertex $u$ and a non-negative value $d_u$, we define $\even_u(d_u)$ as the largest even index
$i_u$ such that $d_G(u,p_{i_u}(u))\leq d_u$.

Pseudocode of $\refine$ can be found in Figure~\ref{fig:refine}.
\begin{figure}
\begin{tabbing}
d\=ddd\=\quad\=\quad\=\quad\=\quad\=\quad\=\quad\=\quad\=\quad\=\quad\=\quad\=\quad\=\kill
\>\texttt{Algorithm} $\refine_{\alpha,\epsilon}(u,v,\tilde d_{uv})$\\\\
\>1. \>$d_u\leftarrow \tilde d_{uv}$\\
\>2. \>$i_u\leftarrow\even_u(d_u)$\\
\>3. \>if not $\refinefurther(u,v,i_u)$ then return $d_u$\\
\>4. \>$i\leftarrow 0$\\
\>5. \>while $\refinefurther(u,v,i_u)$ and
       $i\leq\lceil\log(2\alpha)/\log(1 + \epsilon)\rceil$\\
\>6. \>\>$d_u\leftarrow d_u/(1 + \epsilon)$\\
\>7. \>\>$i_u\leftarrow\even_u(d_u)$\\
\>8. \>\>$i\leftarrow i + 1$\\
\>9. \>$i_u'\leftarrow \even_u(d_u(1 + \epsilon))$\\
\>10.\>if $i_u'\geq 2$ then\\
\>11.\>\>let $j$ be an even index in $0,\ldots,i_u' - 2$
         maximizing $\delta_j$\\
\>12.\>\>if $p_j(u)\in B_v$ then
         return $d_G(u,p_j(u)) + d_G(v,p_j(u))$\\
\>13.\>\>if $p_{j+1}(v)\in B_u$ then
         return $d_G(u,p_{j+1}(v)) + d_G(v,p_{j+1}(v))$\\
\>14.\>if $p_{i_u'}(u)\in B_v$ then
       return $d_G(u,p_{i_u'}(u)) + d_G(v,p_{i_u'}(u))$\\
\>15.\>else return $d_G(u,p_{i_u'+1}(v)) + d_G(v,p_{i_u'+1}(v))$\\
\end{tabbing}
\begin{tabbing}
d\=ddd\=\quad\=\quad\=\quad\=\quad\=\quad\=\quad\=\quad\=\quad\=\quad\=\quad\=\quad\=\kill
\>\texttt{Algorithm} $\refinefurther(u,v,i_u)$\\\\
\>1. \>if $i_u \geq 2$ then\\
\>2. \>\>let $j$ be an even index in $0,\ldots,i_u - 2$
         maximizing $\delta_j$\\
\>3. \>\>if $p_j(u)\in B_v$ or $p_{j+1}(v)\in B_u$ then
         return \texttt{true}\\
\>4. \>if $p_{i_u}(u)\in B_v$ or $p_{i_u + 1}(v)\in B_u$ then
       return \texttt{true}\\
\>5. \>else return \texttt{false}\\
\end{tabbing}
\caption{Algorithm $\refine$ takes as input an $\alpha k$-approximate $uv$-distance $\tilde d_{uv}$ and outputs
         a $(2(1 + \epsilon)k - 1)$-approximate $uv$-distance.}\label{fig:refine}
\end{figure}
It takes as input an $\alpha k$-approximate $uv$-distance $\tilde d_{uv}$ and outputs
a $(2(1 + \epsilon)k - 1)$-approximate $uv$-distance. In line $3$, it calls
subroutine $\refinefurther$ which checks a condition similar to that in Lemma~\ref{Lem:FeasibleRec}
to determine whether the initial estimate $\tilde d_{uv}$ is already a good enough approximation.
If so, $\refine$ outputs this distance in line $3$. Otherwise, it repeatedly refines the initial
estimate in the while loop in lines $5$--$8$. In each iteration, the estimate is reduced by a
factor of $(1 + \epsilon)$ and $\refinefurther$ is called to determine whether we can refine the
estimate further. If not, the while-loop ends and the refined estimate is output in lines
$12$--$15$. The while-loop also terminates after roughly $\log \alpha/\epsilon$ iterations since
then the refined estimate is small enough as the initial estimate is an $\alpha k$-approximate
$uv$-distance. With the Mendel-Naor oracle, we can pick $\alpha = 128$, giving only $O(1/\epsilon)$
iterations. We will implement $\refine$ so that each iteration takes $O(1)$ time, giving the
desired $O(1/\epsilon)$ query time.

The following lemma shows that $\refine$ outputs the stretch we are aiming for.
%fixme Not quite.
\begin{Lem}\label{Lem:Generic}
For $k\geq 4$, $\alpha\geq 1$, and $\epsilon > 0$, algorithm $\refine_{\alpha,\epsilon}(u,v,\tilde d_{uv})$ outputs a
$(2(1 + \epsilon)k - 1)$-approximate $uv$-distance if $\tilde d_{uv}$ is an $\alpha k$-approximate $uv$-distance.
\end{Lem}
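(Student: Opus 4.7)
The plan is to split the analysis by the algorithm's exit point. The output is always at least $d_G(u,v)$---either it equals $\tilde d_{uv}$ (line 3) or it has the form $d_G(u,p)+d_G(v,p)$ (lines 12--15)---so the lower-bound half of the stretch claim is immediate via the triangle inequality, and I will focus on the upper bound.

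First I would translate each of the return lines into an inequality involving $d_G(u, p_{i_u'}(u))$. Lines 12 and 14 give outputs at most $d_G(u,v) + 2d_G(u,p_{i_u'}(u))$; lines 13 and 15 additionally invoke Lemma~\ref{Lem:SimpleProperties} on the $p_{j^*}(u)\notin B_v$ (resp.\ $p_{i_u'}(u)\notin B_v$) branch to bound the $v$-side distance, yielding outputs at most $3d_G(u,v)+2d_G(u,p_{i_u'}(u))$ (using $j^*\leq i_u'-2$). So the task reduces to bounding $d_G(u, p_{i_u'}(u))$.

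The central tool is a ``$\refinefurther$-false'' inequality that I would prove by adapting Lemma~\ref{Lem:FeasibleRec}: whenever $\refinefurther(u,v,i)=\texttt{false}$, the max-$\delta_j$ index in $\{0,\ldots,i-2\}$ together with the level-$i$ part of the check yields $d_G(u,p_i(u))\leq i\cdot d_G(u,v)$ and $d_G(u,p_{i+2}(u))\leq (i+2)d_G(u,v)$. For the line~3 exit this gives $\tilde d_{uv}<d_G(u,p_{i_u+2}(u))\leq (k-1)d_G(u,v)$, since $A_{k-1}\subset B_u\cap B_v$ forces the level-$i$ check to succeed at the maximal even $i_u$ and so $i_u \leq k-3$. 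For the $\refinefurther$-false exit of the loop at some $i_u^{(T)}$, the same inequality plus $d_u<d_G(u,p_{i_u^{(T)}+2}(u))$ (from the definition of $\even_u$) gives either $d_G(u,p_{i_u'}(u))\leq i_u'\, d_G(u,v)$ when $i_u'=i_u^{(T)}+2$, or $d_G(u,p_{i_u'}(u))\leq d_u(1+\epsilon)<(1+\epsilon)(i_u'-2)d_G(u,v)$ otherwise. For the iteration-cap exit, the choice $(1+\epsilon)^I\geq 2\alpha$ yields $d_u(1+\epsilon)\leq\tilde d_{uv}/(2\alpha)\leq k\,d_G(u,v)/2$.

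The hard part will be the subcase $i_u'=i_u^{(T)}+2$ with line~15 firing, where the resulting bound $3+2i_u'\leq 2k+1$ only meets $2(1+\epsilon)k-1$ when $\epsilon k\geq 1$. I would resolve this by noting that line~15 fires only if $p_{i_u'}(u)\notin B_v$, but $A_{k-1}\subset B_v$ forces line~14 whenever $i_u'$ would be $k-1$; hence the maximal $i_u'$ reaching line~15 is $k-2$ (even $k$) or $k-3$ (odd $k$), leaving stretch at most $2k-1$ there. The remaining cases are routine: the $i_u'>i_u^{(T)}+2$ subcase yields stretch $\leq 2(1+\epsilon)k-3-6\epsilon$, and the iteration-cap case yields $\leq k+3$, which is $\leq 2(1+\epsilon)k-1$ whenever $k\geq 4/(1+2\epsilon)$---covered by the hypothesis $k\geq 4$.
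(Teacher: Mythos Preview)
Your approach follows the same case split as the paper (line~3 exit, $\refinefurther$-false loop exit, iteration-cap exit) and your ``$\refinefurther$-false inequality'' is exactly the telescoping argument the paper uses. However, there is a gap in your handling of line~13 in the subcase $i_u'=i_u^{(T)}+2$.

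You bound the line~13 output by $3d_G(u,v)+2d_G(u,p_{i_u'}(u))$, then plug in $d_G(u,p_{i_u'}(u))\le i_u'\,d_G(u,v)$ to get $(2i_u'+3)d_G(u,v)$. You flag this as the ``hard part'' only for line~15 and resolve it there by forcing $i_u'\le k-2$. But line~13 has the \emph{same} bound $2i_u'+3$, and nothing prevents line~13 from firing with $i_u'=k-1$ (when $k$ is odd): the conditions $p_{j^*}(u)\notin B_v$ and $p_{j^*+1}(v)\in B_u$ put no upper bound on $i_u'$ beyond $k-1$. With $i_u'=k-1$ you get stretch $2k+1$, which exceeds $2(1+\epsilon)k-1$ whenever $\epsilon k<1$.

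The fix is simple and already implicit in what you wrote: for line~13 do not throw away the $-2$ in $j^*\le i_u'-2$. In the subcase $i_u'=i_u^{(T)}+2$ you have $j^*\le i_u^{(T)}$, so $d_G(u,p_{j^*}(u))\le d_G(u,p_{i_u^{(T)}}(u))\le i_u^{(T)}\,d_G(u,v)$ by your $\refinefurther$-false inequality, and the line~13 output is at most $(2i_u^{(T)}+3)d_G(u,v)\le(2k-3)d_G(u,v)$. The paper takes a slightly different route here: it avoids the subcase split altogether by using the uniform bound $d_G(u,p_{i_u'}(u))\le d_u(1+\epsilon)<(1+\epsilon)(i_u^{(T)}+2)d_G(u,v)$ together with $i_u^{(T)}+2\le k-1$, and for line~13 it bounds the $u$-side term via the bunch inequality $d_G(u,p_{j+1}(v))\le d_G(u,p_{j+2}(u))$ (valid since $p_{j+1}(v)\in B_u$) rather than the triangle inequality you use.
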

\begin{proof}
Initially, $d_G(u,v)\leq\tilde d_{uv} = d_u$. If the test in line $3$ of
$\refine$ succeeds, i.e., if algorithm $\refinefurther$ returns \texttt{false}, then
since the test in line $3$ of that algorithm fails, a telescoping
sums argument similar to that in the proof of
Lemma~\ref{Lem:FeasibleRec} implies
$d_G(u,p_{i_u}(u))\leq i_u\cdot d_G(u,v)$. Since also line
$4$ fails, we have
$d_G(u, p_{i_u + 2}(u)) - d_G(u,p_{i_u}(u))\leq 2d_G(u,v)$.
Hence $d_G(u,v)\leq d_u < d_G(u,p_{i_u + 2}(u))\leq (i_u + 2)d_G(u,v) \leq (k - 1)d_G(u,v)$
(note that $i_u + 2\leq k - 1$ since $p_{i_u + 1}(v)\notin B_u$ which implies $i_u + 1 < k - 1$ by
Lemma~\ref{Lem:SimpleProperties}). In the following, we can thus assume
that the test in line $3$ of $\refine$ fails.

We know that $\refinefurther(u,v,i_u')$ returns \texttt{true} since $i_u'$ is the value of $i_u$ in the iteration before the last.
Hence, if a distance is returned in line $15$, $p_{i_u' + 1}(v)\in B_u$. In particular, all distances returned are at least
$d_G(u,v)$.

Assume first that the while-loop ended because $\refinefurther(u,v,i_u)$ returned \texttt{false}. Observing the following string
of inequalities in lines $10$ to $15$ will help us in the following:
\[
  d_G(u,p_{i_u}(u)) \leq d_u < d_G(u,p_{i_u + 2}(u))
                   \leq d_G(u,p_{i_u'}(u))
                   \leq d_u(1 + \epsilon).
\]
We have $d_u < d_G(u,p_{i_u + 2}(u))\leq (i_u + 2)d_G(u,v)$. If lines $11$ to $13$ are executed then
$d_G(u,p_j(u)) < d_G(u, p_{i_u'}(u))\leq d_u(1 + \epsilon) < (1 + \epsilon)(i_u + 2)d_G(u,v)$. Thus,
if $p_j(u)\in B_v$, a value of at most
\begin{align*}
2d_G(u,p_j(u)) + d_G(u,v) & <    (2(1 + \epsilon)(i_u + 2) + 1)d_G(u,v)\\
                         &  \leq (2(1 + \epsilon)(k - 1) + 1)d_G(u,v)\\
                          &    < (2(1 + \epsilon)k - 1)d_G(u,v)
\end{align*}
is returned in line $12$. If $p_j(u)\notin B_v$ and $p_{j+1}(v)\in B_u$, Lemma~\ref{Lem:SimpleProperties} gives $j + 1 \leq k - 1$
and
\[
  d_G(v,p_{j+1}(v)) \leq d_G(v,p_j(u))
                   \leq d_G(u,v) + d_G(u,p_j(u))
                   < ((1 + \epsilon)(i_u + 2) + 1)d_G(u,v).
\]
Furthermore, since $p_{j+1}(v)\in B_u$ and $j + 1 < i_u'$, we have
\[
  d_G(u,p_{j+1}(v)) \leq d_G(u,p_{j+2}(u))
                   \leq d_G(u,p_{i_u'}(u))
                   \leq d_u(1 + \epsilon)
                   < (1 + \epsilon)(i_u + 2)d_G(u,v).
\]
Hence, a value of less than
\[
(2(1 + \epsilon)(i_u + 2) + 1)d_G(u,v) \leq
(2(1 + \epsilon)(k - 1) + 1)d_G(u,v) <
(2(1 + \epsilon)k - 1)d_G(u,v)
\]
is returned in line $13$. The same argument as for line $12$
with $i_u'$ instead of $j$ shows that the desired distance estimate is output
in line $14$. If we reach line $15$, $p_{i_u'}(u)\notin B_v$ and (as already observed) $p_{i_u' + 1}(v)\in B_u$.
Then $i_u + 2\leq i_u' \leq k - 2$ and
\begin{align*}
  d_G(v,p_{i_u' + 1}(v)) & \leq d_G(v,p_{i_u'}(u))\\
                      & \leq d_G(u,v) + d_G(u,p_{i_u'}(u))\\
                      & \leq d_G(u,v) + d_u(1 + \epsilon)\\
                         &    < ((1 + \epsilon)(i_u + 2) + 1)d_G(u,v)\\
                         & \leq ((1 + \epsilon)(k - 2) + 1)d_G(u,v)\\
                         &    < ((1 + \epsilon)k - 1)d_G(u,v)
\end{align*}
so a value of at most
\[
  2d_G(v,p_{i_u'+1}(v)) + d_G(u,v) <
  (2((1 + \epsilon)k - 1) + 1)d_G(u,v) =
  (2(1 + \epsilon)k - 1)d_G(u,v)
\]
is returned in line $15$.

Now assume that the while-loop ended with $\refinefurther(u,v,i_u)$ returning \texttt{true}.
Then $i = \lceil\log(2\alpha)/\log(1 + \epsilon)\rceil$ iterations have been executed so
the final value of $d_u$ is at most
$\alpha k\cdot d_G(u,v)/(1 + \epsilon)^i\leq\frac k 2\cdot d_G(u,v)$. If the algorithm returns a value in line $12$ then
this value is at most
$2d_G(u,p_j(u)) + d_G(u,v) < 2d_u(1 + \epsilon) + d_G(u,v)\leq ((1 + \epsilon)k + 1)d_G(u,v)$.
If $p_j(u)\notin B_v$ and $p_{j+1}(v)\in B_u$ then
$d_G(v,p_{j+1}(v))\leq d_G(v,p_j(u))\leq d_G(u,v) + d_G(u,p_j(u)) < d_G(u,v) + d_u(1 + \epsilon)$ so a value of at most
$2d_G(v,p_{j+1}(v)) + d_G(u,v) < 2d_u(1 + \epsilon) + 3d_G(u,v) \leq ((1 + \epsilon)k + 3)d_G(u,v)$ is returned in line
$13$. Since $k\geq 4$, this gives the desired estimate. A similar argument gives the same estimate for lines $14$ and $15$.
This completes the proof.
\end{proof}

\subsection{Combining with the Mendel-Naor oracle}\label{subsec:MendelNaor}
%fixme Intro.
Our oracle will query that of Mendel and Naor for a distance estimate and then give it as input to an efficient implementation of
$\refine$. It is worth pointing out that any oracle with universally constant
query time and $O(k)$ stretch can be used as a black box and not just that
in~\cite{MendelNaor}; the only requirement
is that the number of distinct distances it can output is not too big; see
details below.

We will keep a sorted list of values such that for any distance query, the list contains
the $O(1/\log(1 + \epsilon))$ $d_u$-values found in $\refine$ as consecutive entries. We linearly traverse the list to
identify these entries some of which point to $i_u$-indices needed by $\refine$. These pointers together with some
additional preprocessing allow us to execute each iteration of the while-loop in $O(1)$ time.

We will ensure the property that the elements of the list are spaced by a factor of at least $1 + \epsilon$. For this we need a new definition.
Let $S$ be a non-empty set of real numbers and let $\epsilon > 0$ be given. Define the \emph{$\epsilon$-comb} of $S$
to be the set $S_\epsilon$ of real numbers obtained by the iterative algorithm $\comb_\epsilon(S)$ in Figure~\ref{fig:comb}.
Lemmas~\ref{Lem:Comb} and~\ref{Lem:EpsilonComb} below show that the $\epsilon$-comb of a certain superset of the set of
all distances that can be output by the Mendel-Naor oracle has the above property while not containing too many elements.
\begin{figure}
\begin{tabbing}
d\=dd\=\quad\=\quad\=\quad\=\quad\=\quad\=\quad\=\quad\=\quad\=\quad\=\quad\=\quad\=\kill
\>\texttt{Algorithm} $\comb_\epsilon(S)$\\\\
\>1. \>let $s_{\max}$ be the largest element of $S$\\
\>2. \>$S_\epsilon \leftarrow \{s_{\max}\}$; $S'\leftarrow S\setminus\{s_{\max}\}$\\
\>3. \>while $S'\neq\emptyset$\\
\>4. \>\>let $s_1$ be the largest element of $S'$ and let $s_2$ be
         the smallest element of $S_\epsilon$\\
\>5. \>\>$s \leftarrow \min\{s_1,s_2/(1 + \epsilon)\}$\\
\>6. \>\>$S_\epsilon\leftarrow S_\epsilon\cup\{s\}$\\
\>7. \>\>remove all the elements from $S'$ that have value
         at least $s$\\
\>8. \>return $S_\epsilon$
\end{tabbing}
\caption{Algorithm that outputs the $\epsilon$-comb $S_\epsilon$ of a non-empty set $S$ of real values.}\label{fig:comb}
\end{figure}
\begin{Lem}\label{Lem:Comb}
Let $S_\epsilon$ be the $\epsilon$-comb of a set $S$. Then
\begin{enumerate}
\item for any $s\in S$, there is a unique $s'\in S_\epsilon$ such that $s\leq s' < (1 + \epsilon)s$,
\item any two elements of $S_\epsilon$ differ by a factor of at least $1 + \epsilon$, and
\item $|S_\epsilon|\leq |S|$.
\end{enumerate}
\end{Lem}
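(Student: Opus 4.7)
My plan is to prove the three parts in the order (2), (3), (1), since the uniqueness half of (1) needs the spacing from (2). All three will follow from a short induction on the iterations of the while loop together with a pair of running invariants.

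For claim (2), I would prove by induction on iterations that any two current elements of $S_\epsilon$ differ by a factor of at least $1+\epsilon$. The base case $S_\epsilon=\{s_{\max}\}$ is vacuous, and inductively the newly added element $s=\min\{s_1,s_2/(1+\epsilon)\}$ satisfies $s\le s_2/(1+\epsilon)$, so the previous minimum $s_2$---and hence every earlier element of $S_\epsilon$---exceeds $s$ by at least a factor $1+\epsilon$. Claim (3) is a one-line counting argument: each iteration adds exactly one element to $S_\epsilon$ and, because $s\le s_1$ and step $7$ deletes every member of $S'$ of value $\ge s$, removes at least $s_1$ from $S'$; starting from $|S'|=|S|-1$ this gives $|S_\epsilon|\le |S|$.

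For claim (1), I would carry a second invariant that every element of $S_\epsilon$ is at least every element of $S'$, easily maintained because each added element is $\le s_1$ and each deleted element is $\ge s$. The case $s=s_{\max}$ is trivial. For $s\in S\setminus\{s_{\max}\}$, consider the iteration that removes $s$ and write $s_1^\ast, s_2^\ast, s^\ast$ for the values of $s_1, s_2, s$ at that moment. Removal forces $s\ge s^\ast$, while the definition of $s_1^\ast$ forces $s\le s_1^\ast$. If $s^\ast=s_1^\ast$, these bounds collapse to $s=s^\ast$, and I take $s'=s^\ast$. Otherwise $s^\ast=s_2^\ast/(1+\epsilon)<s_1^\ast$; I take $s'=s^\ast$ when $s=s^\ast$, and $s'=s_2^\ast$ when $s>s^\ast$---for this last choice, the dominance invariant gives $s_2^\ast\ge s$ and the defining formula gives $s_2^\ast=(1+\epsilon)s^\ast<(1+\epsilon)s$, so $s'\in[s,(1+\epsilon)s)$. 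Uniqueness is then immediate from (2), since the window $[s,(1+\epsilon)s)$ has endpoints in ratio $1+\epsilon$ and therefore contains at most one element of $S_\epsilon$.

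The main obstacle is the case analysis for (1): one has to pin down exactly which iteration disposes of each $s\in S\setminus\{s_{\max}\}$ and, within that iteration, decide whether the witness is the just-added $s^\ast$ or the pre-existing minimum $s_2^\ast$, using dominance to certify $s_2^\ast\ge s$ and the defining formula for $s^\ast$ to certify $s_2^\ast<(1+\epsilon)s$. Once the two invariants are in place, the other two claims are essentially immediate.
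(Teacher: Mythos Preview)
Your argument is correct. Parts (2) and (3) match the paper's proof essentially verbatim. For part (1), however, you take a genuinely different route: the paper argues by contradiction, letting $s_{\min}$ be the least element of $S_\epsilon$ with $s_{\min}\ge s$, assuming $s_{\min}\ge(1+\epsilon)s$, and then showing that the iteration \emph{after} $s_{\min}$ is inserted produces an element of $S_\epsilon$ strictly between $s$ and $s_{\min}$. You instead work constructively from the iteration in which $s$ is \emph{deleted} from $S'$, exhibiting the witness $s'$ as either the freshly inserted $s^\ast$ or the prior minimum $s_2^\ast$. Your approach is arguably cleaner and more explicit about which element of $S_\epsilon$ serves as the witness, at the cost of introducing and maintaining the dominance invariant (which the paper does not need). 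One small presentational nit: your stated reason for the dominance invariant---``each added element is $\le s_1$''---is not the operative fact; what you actually need, and what you correctly use later, is that after line~7 all surviving elements of $S'$ are strictly below the newly inserted $s$, so $s$ (and a fortiori every older member of $S_\epsilon$) dominates them.
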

\begin{proof}
To show the first part,
define $s^{(i)}$ to be the element $s$ found in the $i$th iteration of the while-loop. Define $s_1^{(i)}$ and $s_2^{(i)}$
similarly. Now, let $s\in S$ be given. Since $s_{\max}\in S_\epsilon$, there is an element of $S_\epsilon$ which is at least $s$.
Let $s_{\min}$ be the smallest such element and suppose for the sake of contradiction that $s_{\min}\geq (1 + \epsilon)s$.
Let $i$ be the iteration in which $s_{\min}$ is added to $S_\epsilon$. Since $s < s^{(i)}$, $s = s^{(j)}$ for some $j\geq i + 1$
so $s\leq s_1^{(i+1)}$. After line $7$ has been executed, every element of
$S'$ is strictly smaller than $s^{(i)} = s_{\min}$. Thus, $s\leq s_1^{(i+1)} < s_{\min}$. Since also
$s_2^{(i+1)} = s^{(i)} = s_{\min}\geq (1 + \epsilon)s$, it follows that
$s\leq s^{(i+1)} < s_{\min}$. But $s^{(i+1)}\in S_\epsilon$, contradicting the choice of $s_{\min}$.

We have shown that $s\leq s_{\min}\leq (1 + \epsilon)s$. To show uniqueness, let $s'$ be the first element added to $S_{\epsilon}$
for which $s\leq s' < (1 + \epsilon)s$. Assume for the sake of contradiction that $s'\neq s_{\min}$. Then $s_{\min}$ was added
in a later iteration than $s'$ so $s\leq s_{\min} = s^{(i)}\leq s_2^{(i)}/(1 + \epsilon)\leq s'/(1 + \epsilon) < s$,
a contradiction. Thus, $s' = s_{\min}$, showing uniqueness.

The second part of the lemma holds since in line $5$, $s_2$ is the smallest element of $S_{\epsilon}$ and the
next element $s$ to be added to this set satisfies $s\leq s_2/(1 + \epsilon)$.

The third part of the lemma follows since in line $2$,
$|S_\epsilon| = 1$ and $S' = |S| - 1$ and since at least one element (namely $s_1$) is removed from $S'$ in line $7$ after
an element has been added to $S_\epsilon$.
\end{proof}

For any vertices $u$ and $v$, denote by $d_{MN}(u,v)$ the $uv$-distance estimate output by the Mendel-Naor oracle and let
$\alpha_{MN}k$ be the stretch achieved by the oracle, i.e., $\alpha_{MN} = 128$.
Let $\mathcal D_{MN} = \{d_{MN}(u,v) | u,v\in V\}$ be the set of all distances that the oracle can output.
\begin{Lem}\label{Lem:DMNBound}
$|\mathcal D_{MN}| = O(n^{1 + 1/k})$.
\end{Lem}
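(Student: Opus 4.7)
The plan is to open up the Mendel--Naor oracle and exploit the fact that its query procedure is ultimately a single label lookup in a tree, not an arithmetic combination of several stored distances. Concretely, the oracle of~\cite{MendelNaor,CKR} is built on top of $t = O(n^{1/k})$ ultrametrics $U_1,\ldots,U_t$ on $V$, and a query for a pair $(u,v)$ selects some ultrametric $U_i$ (the choice depending on $u$ and $v$) and returns $d_{U_i}(u,v)$.

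Each ultrametric $U_i$ is realised by a rooted tree $T_i$ whose leaves are in bijection with $V$ and whose internal nodes carry non-negative real labels; the value of $d_{U_i}(u,v)$ is just the label at the least common ancestor of $u$ and $v$ in $T_i$. Since $T_i$ has $n$ leaves, it has at most $n-1$ internal nodes, so the range of $d_{U_i}$ contains at most $n-1$ distinct values. Summing over the $t$ ultrametrics,
\[
  |\mathcal{D}_{MN}| \;\leq\; \sum_{i=1}^{t}(n-1) \;=\; O(t\cdot n) \;=\; O(n^{1 + 1/k}),
\]
which is the claimed bound.

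The only real obstacle is checking that the Mendel--Naor query really does output a single label from one of the stored ultrametrics rather than, say, a sum of a pair of stored labels as in the Thorup--Zwick query: if the latter were the case, the natural counting above would only yield $O(n^{2 + 2/k})$. This is a structural fact about the ultrametric-based design of~\cite{MendelNaor,CKR} and, once invoked, the rest of the argument is just the elementary count above.
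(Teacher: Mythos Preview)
Your argument is essentially the same as the paper's: both hinge on the observation that every value the Mendel--Naor oracle can output is a label stored at some node of one of its ultrametric trees, so $|\mathcal D_{MN}|$ is bounded by the total number of stored labels, which is the oracle's space $O(n^{1+1/k})$.

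One small remark: the paper's version is a bit more robust in that it appeals directly to the $O(n^{1+1/k})$ space bound rather than to the specific decomposition ``$t = O(n^{1/k})$ trees, each with leaf set $V$''. In the Mendel--Naor construction the ultrametrics are built over subsets $X_i\subseteq V$ whose sizes may vary, with only $\sum_i |X_i| = O(n^{1+1/k})$ guaranteed; so the per-tree count ``at most $n-1$ internal nodes'' and the multiplier ``$t = O(n^{1/k})$'' are not separately the right quantities, even though their product happens to match the correct total. Phrasing the bound as ``number of outputs $\le$ number of stored labels $\le$ space of the oracle'' sidesteps this and is what the paper does.
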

\begin{proof}
The Mendel-Naor oracle stores trees representing certain ultrametrics. Each tree node is labelled with a distance and each
approximate distance output
by the Mendel-Naor oracle is one such label. Hence, since the oracle has size $O(n^{1 + 1/k})$, so has $\mathcal D_{MN}$.
\end{proof}
%fixme Text needed here.
\begin{Lem}\label{Lem:EpsilonComb}
For each $d\in\mathcal D_{MN}$, let
$\mathcal D_d = \{d/(1 + \epsilon)^i | 0\leq i\leq\lceil \log(2\alpha_{MN}(1 + \epsilon))/\log(1 + \epsilon)\rceil\}$ and let
$\mathcal D_\epsilon$ be the $\epsilon$-comb of $\cup_{d\in\mathcal D_{MN}}\mathcal D_d$. Then for each $d\in\mathcal D_{MN}$,
there exists a unique $d'\in\mathcal D_{\epsilon}$ such that $d\leq d'\leq d(1 + \epsilon)$ and
$d'/(1 + \epsilon)^i\in\mathcal D_{\epsilon}$ for $0\leq i\leq \lceil \log(2\alpha_{MN}(1 + \epsilon))/\log(1 + \epsilon)\rceil$.
Also, $|\mathcal D_\epsilon| = O(n^{1 + 1/k}/\log(1 + \epsilon))$.
\end{Lem}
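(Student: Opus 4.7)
The plan is to apply Lemma~\ref{Lem:Comb} to the input set $S := \bigcup_{d \in \mathcal D_{MN}} \mathcal D_d$. For each $d \in \mathcal D_{MN}$, Lemma~\ref{Lem:Comb} part 1 applied to $d \in \mathcal D_d \subseteq S$ produces a unique $d' \in \mathcal D_\epsilon$ with $d \leq d' < (1+\epsilon)d$, hence also $d' \leq d(1+\epsilon)$; this $d'$ is the one asserted in the statement. What remains is to prove the chain property $d'/(1+\epsilon)^i \in \mathcal D_\epsilon$ for $0 \leq i \leq I := \lceil \log(2\alpha_{MN}(1+\epsilon))/\log(1+\epsilon)\rceil$, and the size bound.

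The chain property I establish by induction on $i$, with $i = 0$ being immediate. For the inductive step, set $v := d'/(1+\epsilon)^{i-1}$ and let $t$ be the iteration of $\comb_\epsilon$ at which $v$ is added. Since elements enter $\mathcal D_\epsilon$ in strictly decreasing order, $v$ is the smallest element of $\mathcal D_\epsilon$ at the start of iteration $t+1$, and the surviving set equals $S' = \{s \in S : s < v\}$; the next value added is therefore $\min(s_1, v/(1+\epsilon))$ with $s_1 = \max S'$. To force this minimum to equal $v/(1+\epsilon) = d'/(1+\epsilon)^i$, it suffices to produce an element $s \in S \cap [v/(1+\epsilon), v)$, which yields $s_1 \geq v/(1+\epsilon)$. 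I split on cases: when $d < d'$, take $s := d/(1+\epsilon)^{i-1}$, so that $d(1+\epsilon) > d'$ gives $s > v/(1+\epsilon)$ while $d < d'$ gives $s < v$; when $d = d'$, the value $d/(1+\epsilon)^{i-1}$ coincides with $v$ itself and falls outside $S'$, so I replace it with $s := d/(1+\epsilon)^i = v/(1+\epsilon)$, which trivially lies in $[v/(1+\epsilon), v)$. Both witnesses live in $\mathcal D_d \subseteq S$ because $0 \leq i \leq I$.

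Finally, the size bound follows from Lemma~\ref{Lem:Comb} part 3, from $|S| \leq |\mathcal D_{MN}|(I+1)$, from Lemma~\ref{Lem:DMNBound} giving $|\mathcal D_{MN}| = O(n^{1+1/k})$, and from $I+1 = O(1/\log(1+\epsilon))$ (recall $\alpha_{MN} = 128$ is a constant). The main obstacle is the inductive step: a direct appeal to Lemma~\ref{Lem:Comb} parts 1 and 2 only shows that the representative of $d/(1+\epsilon)^i$ is at most $d'/(1+\epsilon)^i$ and does not pin it down, so one must trace $\comb_\epsilon$ and locate a concrete element of $S$ in the correct subinterval. The case split between $d < d'$ and $d = d'$ is essential because in the equality case the natural witness coincides with $v$ and must be shifted one step further down the geometric chain.
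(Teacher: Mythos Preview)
Your proof is correct and follows essentially the same approach as the paper: existence and uniqueness of $d'$ from part~1 of Lemma~\ref{Lem:Comb}, the chain property by induction on $i$ via tracing the next iteration of $\comb_\epsilon$ after $d'/(1+\epsilon)^{i-1}$ is added, and the size bound from part~3 of Lemma~\ref{Lem:Comb} together with Lemma~\ref{Lem:DMNBound}. Your explicit case split on $d<d'$ versus $d=d'$ is in fact slightly more careful than the paper's write-up, which asserts $d_{i-1}\in S'$ without separating out the equality case where $d_{i-1}$ coincides with the just-added element and one must instead use $d_i$ as the witness.
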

\begin{proof}
The existence and uniqueness of $d'$ follows from $\mathcal D_{MN}\subset\bigcup_{d\in\mathcal D_{MN}}\mathcal D_d$ and from part
$1$ of Lemma~\ref{Lem:Comb}. Define $d_i = d/(1 + \epsilon)^i$ and $d_i' = d'/(1 + \epsilon)^i$.
We use induction on $i\geq 0$ to show that $d_i'\in\mathcal D_\epsilon$. The base case $i = 0$ has
been shown since $d_0' = d'$ so assume $0 < i\leq\lceil \log(2\alpha_{MN}(1 + \epsilon))/\log(1 + \epsilon)\rceil$
and that $d_{i-1}'\in\mathcal D_\epsilon$. Consider the iteration of
$\comb_\epsilon(\cup_{d\in\mathcal D_{MN}}\mathcal D_d)$ following that in which $d_{i-1}$ was added to $S_\epsilon$. Here,
$s_1\geq d_{i-1}$ since $d_{i-1}\in S'$ and so $s_2 = d_{i-1}' = d_i'(1 + \epsilon)\leq d_{i-1}(1 + \epsilon)\leq s_1(1 + \epsilon)$,
giving $s = \min\{s_1,s_2/(1 + \epsilon)\} = s_2/(1 + \epsilon) = d_i'$ which
is added to $S_{\epsilon}$ in line $6$. Hence, $d_i'\in\mathcal D_\epsilon$, completing the induction step.

For the last part of the lemma, since $\log(2\alpha_{MN}(1 + \epsilon))/\log(1 + \epsilon) = O(1/\log(1 + \epsilon))$,
Lemma~\ref{Lem:DMNBound} and part $3$ of Lemma~\ref{Lem:Comb} give
\[
  |\mathcal D_\epsilon| \leq\sum_{d\in\mathcal D_{MN}}|\mathcal D_d| = O(|\mathcal D_{MN}|/\log(1 + \epsilon)) = O(n^{1 + 1/k}/\log(1 + \epsilon)).
\]
\end{proof}

As mentioned earlier, certain elements of the $\epsilon$-comb in Lemma~\ref{Lem:EpsilonComb} contain pointers to $i_u$-indices.
These pointers are defined by the following type of map. For a set $S$ of real values with smallest element $s_{\min}$, define
$\tau_S:[s_{\min},\infty)\rightarrow S$ by $\tau_S(x) = \max\{s\in S | s\leq x\}$.
\begin{Lem}\label{Lem:tau}
Let $S$ be a set of real values with smallest element $s_{\min}$ and let  $x,y\in\mathcal [s_{\min},\infty)$.
%Then $\tau_S(x)\leq x$.
If $s_1 < s_2$ are consecutive elements in $S$ then $\tau_S(x) = \tau_S(y) = s_1$ iff $x,y\in[s_1,s_2)$.
\end{Lem}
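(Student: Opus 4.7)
The claim is essentially a direct unpacking of the definition $\tau_S(x) = \max\{s\in S\mid s\leq x\}$, together with the hypothesis that $s_1 < s_2$ are \emph{consecutive} in $S$, i.e., no element of $S$ lies strictly between them. The plan is to prove the two directions of the ``iff'' separately, each by a short argument from the definition; there is no real obstacle, but one does have to be careful to use both halves of ``consecutive'' (nothing of $S$ between $s_1$ and $s_2$) and the side condition $x,y\geq s_{\min}$ only to ensure the max is well defined.

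For the forward direction, I would assume $\tau_S(x)=\tau_S(y)=s_1$. By definition of $\tau_S$, this gives $s_1\leq x$ and $s_1\leq y$, so both $x$ and $y$ are at least $s_1$. If one of them, say $x$, satisfied $x\geq s_2$, then $s_2\in\{s\in S\mid s\leq x\}$, and since $s_2>s_1$ this would contradict the maximality of $s_1$ in that set. Hence $x<s_2$, and symmetrically $y<s_2$, so $x,y\in[s_1,s_2)$.

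For the reverse direction, I would assume $x,y\in[s_1,s_2)$. Then $s_1\in\{s\in S\mid s\leq x\}$, so this set is nonempty and $\tau_S(x)\geq s_1$. On the other hand, any $s\in S$ with $s>s_1$ must satisfy $s\geq s_2$ (this is exactly where consecutiveness of $s_1, s_2$ in $S$ enters), and hence $s\geq s_2>x$, so such an $s$ does not lie in $\{s\in S\mid s\leq x\}$. Therefore $\tau_S(x)=s_1$, and the identical argument applied to $y$ gives $\tau_S(y)=s_1$.

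Combining the two directions yields the claimed equivalence. The only subtlety is that $\tau_S$ is only defined on $[s_{\min},\infty)$, but this is precisely what the hypothesis $x,y\in[s_{\min},\infty)$ rules in, so both sides of the equivalence are meaningful. I do not anticipate a hard step; the statement is really a sanity check formalizing that the map $\tau_S$ is constant on the half-open interval determined by two consecutive elements of $S$, which is exactly the property needed later when linearly scanning the sorted list $\mathcal D_\epsilon$ to locate the $d_u$-values produced by $\refine$.
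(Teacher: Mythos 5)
Your proof is correct: both directions follow exactly as you argue from the definition of $\tau_S$ and the consecutiveness of $s_1,s_2$. The paper in fact states this lemma without any proof, treating it as immediate from the definition, and your argument is precisely the routine verification it implicitly relies on.
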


\subsection{Preprocessing}\label{subsec:Preproc}
We are now ready to give an efficient implementation of algorithm $\refine$. We construct the Mendel-Naor oracle
and obtain the set $\mathcal D_{MN}$. For each vertex $u$, we construct bunch $B_u$ and the set $P_u$ of
values $d_G(u,v)$ for each $v\in B_u$. We represent $P_u$ as a list sorted by increasing value. Furthermore,
we find a set $S_u$ of real values as follows. For each index $i\in\{0,\ldots, |P_u| - 2\}$ of $P_u$, subdivide interval
$[P_u[i],P_u[i+1]]$ into four even-length subintervals. We denote by $\mathcal I_u$ the set
of these subintervals over all $i$ and form the set $S_u$ of all their endpoints.
We obtain the $\epsilon$-comb $\mathcal D_\epsilon$ as defined in Lemma~\ref{Lem:EpsilonComb} and
represent it as a sorted list. Then we form a set $\mathcal D_\epsilon(u)$ of those $d\in\mathcal D_\epsilon$
for which $d$ is either the smallest or the largest element that $\tau_{S_u}$ maps to $\tau_{S_u}(d)$; see Figure~\ref{fig:Pointers}.
\begin{figure}%[!ht]
\centerline{\scalebox{1.0}{\input{Pointers.pstex_t}}}
\caption{Sets $\cup_{d\in\mathcal D_{MN}}\mathcal D_d$, $\mathcal D_\epsilon$, $S_u$, and $P_u$ (ordered by increasing value from left
         to right) as well as the map $\tau_{S_u}$ restricted to the subset $\mathcal D_\epsilon(u)$ (white points) of
         $\mathcal D_\epsilon$. Elements of $\cup_{d\in\mathcal D_{MN}}\mathcal D_d$ represented by long line segments are those
         belonging to $\mathcal D_{MN}$. For clarity, elements of each set $\mathcal D_d$ from Lemma~\ref{Lem:EpsilonComb}
         are evenly spaced in the figure.}
\label{fig:Pointers}
\end{figure}
With each $d\in\mathcal D_{\epsilon}(u)$, we associate the largest even index $i_u(d)$ such that
$d_G(u,p_{i_u(d)}(u))\leq\tau_{S_u}(d)$. For all $d\in\mathcal D_\epsilon\setminus\mathcal D_\epsilon(u)$, we leave
$i_u(d)$ undefined.

\subsection{Query}\label{subsec:Query}
To answer an approximate $uv$-distance query, we first obtain the Mendel-Naor estimate $d_{MN}(u,v)$ and
identify the smallest element $\tilde d_{uv}$ of $\mathcal D_{\epsilon}$ which is at least $d_{MN}(u,v)$.
This element is the input to $\refine_{\alpha,\epsilon}$ where $\alpha = (1 + \epsilon)\alpha_{MN}$.
By Lemma~\ref{Lem:EpsilonComb}, $\tilde d_{uv}$ is an $\alpha k$-approximate distance so the
output will be a $(2(1 + \epsilon)k - 1)$-approximate distance.

It follows from Lemma~\ref{Lem:EpsilonComb} and part $2$ of Lemma~\ref{Lem:Comb}
that all values of $d_u$ in $\refine$ are consecutive and start
from $\tilde d_{uv}$ in $\mathcal D_\epsilon$. Linearly traversing the list from $\tilde d_{uv}$ thus
corresponds to updating $d_u$ in the while-loop.

We also need to maintain even index $i_u$. Assume for now that for the initial $d_u$, index
$i_u(d_u)$ is defined. Then the initial $i_u$ is $i_u(d_u)$. As $d_u$ is updated in the while-loop,
at some point it may happen that $i_u(d_u)$ is undefined. Let $d_u'$ be the last value encountered in the linear traversal
such that $i_u(d_u')$ is defined. Then $d_u'$ is the largest element in $\mathcal D_\epsilon$ that $\tau_{S_u}$ maps to
$\tau_{S_u}(d_u')$ and $d_u$ is larger than the smallest such element. Hence,
$\tau_{S_u}(d_u) = \tau_{S_u}(d_u')$ and it follows from Lemma~\ref{Lem:tau} that $i_u$ need
not be updated from the
value it had when $d_u'$ was encountered. Thus, maintaining $i_u$ is easy, assuming its initial value can be identified.

What if $i_u(d_u)$ is undefined for the initial $d_u$? Then we move down the list $\mathcal D_\epsilon$
until we find an index $i_u(d_u')$ that is defined. By Lemma~\ref{Lem:tau}, this index is the initial
value of $i_u$ and we are done. The problem with this approach is that we may need to traverse a large part
of the list before the index can be found. We can only afford to traverse $O(1/\log(1 + \epsilon))$
entries of $\mathcal D_\epsilon$. The following lemma shows that if the search has
not identified an index $i_u(d_u)$ after a small number of steps then our oracle can output twice the distance value
in the final entry considered.
\begin{Lem}\label{Lem:NoPointer}
For vertices $u$ and $v$, let $j$ be the index of $\mathcal D_\epsilon$ such that
$\mathcal D_\epsilon[j] = \tilde d_{uv}$.
Assume that $j_{\min} = j - \lceil\log(2\alpha_{MN})/\log(1 + \epsilon)\rceil$ is an index of $\mathcal D_\epsilon$ such that
$i_u(\mathcal D_\epsilon[j'])$ and $i_v(\mathcal D_\epsilon[j'])$ are undefined for all $j_{\min}\leq j'\leq j$.
Then $d_G(u,v)\leq 2\mathcal D_\epsilon[j_{\min}]\leq (1 + \epsilon)k\cdot d_G(u,v)$.
\end{Lem}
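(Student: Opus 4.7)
My plan is to prove the two halves of the inequality separately.

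For the upper bound $2\mathcal D_\epsilon[j_{\min}]\leq(1+\epsilon)k\cdot d_G(u,v)$, I would first identify $\mathcal D_\epsilon[j_{\min}]$ precisely. By Lemma~\ref{Lem:EpsilonComb}, $\tilde d_{uv}/(1+\epsilon)^i$ lies in $\mathcal D_\epsilon$ for every $0\leq i\leq j-j_{\min}$, and by part~2 of Lemma~\ref{Lem:Comb} consecutive elements of $\mathcal D_\epsilon$ differ by a factor of at least $1+\epsilon$. These two facts together force the successive indices $j,j-1,\ldots,j_{\min}$ of $\mathcal D_\epsilon$ to correspond exactly to the powers $\tilde d_{uv},\tilde d_{uv}/(1+\epsilon),\ldots,\tilde d_{uv}/(1+\epsilon)^{j-j_{\min}}$, so $\mathcal D_\epsilon[j_{\min}]=\tilde d_{uv}/(1+\epsilon)^{j-j_{\min}}$. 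The bounds $\tilde d_{uv}\leq(1+\epsilon)d_{MN}(u,v)\leq(1+\epsilon)\alpha_{MN}k\cdot d_G(u,v)$ and $(1+\epsilon)^{j-j_{\min}}\geq 2\alpha_{MN}$ then finish this direction.

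For the lower bound $d_G(u,v)\leq 2\mathcal D_\epsilon[j_{\min}]$, I would first convert the undefined hypothesis into a structural statement about $P_u$ and $P_v$. If two consecutive elements of $\mathcal D_\epsilon$ belonged to distinct $\tau_{S_u}$-classes, the smaller one would be the maximum of its class and hence lie in $\mathcal D_\epsilon(u)$, so $i_u$ would be defined on it, contradicting the hypothesis. Iterating this observation, the entire chain $\mathcal D_\epsilon[j_{\min}-1],\ldots,\mathcal D_\epsilon[j+1]$ lies in a single $\tau_{S_u}$-class, which in turn is contained in one quarter-subinterval $[a_u,b_u]\subseteq[P_u[\ell-1],P_u[\ell]]$ of a gap in $P_u$. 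Hence $P_u[\ell-1]\leq d_{\min}/(1+\epsilon)$, $P_u[\ell]\geq d_{\max}(1+\epsilon)$, and $P_u\cap(d_{\min}/(1+\epsilon),d_{\max}(1+\epsilon))=\emptyset$; the symmetric statement holds for $P_v$.

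Now suppose $d_G(u,v)>2d_{\min}$ for contradiction. Since $A_{k-1}\subseteq B_u\cap B_v$ by Lemma~\ref{Lem:SimpleProperties}, for every $w\in A_{k-1}$ both $d_G(u,w)\in P_u$ and $d_G(v,w)\in P_v$ must avoid the respective gap. If some $w$ has both distances at most $d_{\min}/(1+\epsilon)$, the triangle inequality immediately gives $d_G(u,v)\leq 2d_{\min}/(1+\epsilon)<2d_{\min}$, a contradiction. Otherwise I would climb a pivot ladder via Lemma~\ref{Lem:SimpleProperties}: because $v\notin B_u$ forces the pivot at $v$'s level $i$ to satisfy $d_G(u,p_{i+1}(u))\leq d_{\min}/(1+\epsilon)$, while in the remaining subcases $d_G(u,p_{k-1}(u))\geq d_{\max}(1+\epsilon)$, there is a smallest index $j^*$ with $d_G(u,p_{j^*}(u))\geq d_{\max}(1+\epsilon)$, and its predecessor $p_{j^*-1}(u)\in B_u$ sits just below the $P_u$-gap. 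A case analysis on whether $p_{j^*-1}(u)\in B_v$ --- invoking Lemma~\ref{Lem:SimpleProperties} in the negative case to jump to $p_{j^*}(v)$ --- produces a vertex whose $v$-distance lies in $(d_{\min},d_{\max}+d_{\min}/(1+\epsilon))$ and, by the $P_v$-gap, is at least $d_{\max}(1+\epsilon)$. Combining these inequalities with $d_{\min}/d_{\max}\leq 1/(2\alpha_{MN})$ produces the required contradiction.

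The hard part will be executing this pivot-ladder argument cleanly across every subcase. The most delicate point is arranging the triangle inequalities so that the $(1+\epsilon)$ slack always sits on the correct side, ensuring that the gap constraint transfers between the pivot chains of $u$ and $v$ without sacrificing the factor $2\alpha_{MN}$ that makes the contradiction tight; handling the situation in which the ladder must climb several levels before the constraint can be moved from $u$'s pivots to $v$'s is where the bookkeeping is most involved.
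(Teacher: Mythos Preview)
Your upper bound argument is correct and essentially identical to the paper's.

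Your lower bound argument, however, has a genuine gap. Trace the contradiction you sketch in Case ``$p_{j^*-1}(u)\in B_v$''. You have $d_G(u,p_{j^*-1}(u))\leq d_{\min}/(1+\epsilon)$ and $d_G(u,v)\leq\tilde d_{uv}=d_{\max}$, so the triangle inequality gives only
\[
  d_G(v,p_{j^*-1}(u))\leq d_{\max}+d_{\min}/(1+\epsilon).
\]
Combining with the $P_v$-gap bound $d_G(v,p_{j^*-1}(u))\geq d_{\max}(1+\epsilon)$ yields $\epsilon(1+\epsilon)\leq d_{\min}/d_{\max}\leq 1/(2\alpha_{MN})$. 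For small $\epsilon$ (say $\epsilon<1/256$) this is \emph{not} a contradiction, so the argument collapses precisely in the regime the lemma is meant to cover. The multiplicative gap $(d_{\min}/(1+\epsilon),\,d_{\max}(1+\epsilon))$ you extract is simply too weak; you never use the factor~$4$ coming from the quarter-subintervals of $\mathcal I_u$, which is exactly what converts the multiplicative information into an \emph{additive} gap of length $>2d_G(u,v)$. (Also, the clause ``because $v\notin B_u$ forces the pivot at $v$'s level $i$ \ldots'' is not a meaningful statement as written; nothing in the hypotheses gives $v\notin B_u$.)

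The paper's argument for the lower bound is both different and much shorter, and it is worth internalising. From the undefined-index hypothesis one gets that the quarter-subinterval containing $\mathcal D_\epsilon[j_{\min}],\ldots,\mathcal D_\epsilon[j]$ has length $>\tfrac12 d_G(u,v)$ (using $\mathcal D_\epsilon[j]\geq d_G(u,v)$ and $(1+\epsilon)^{j-j_{\min}}\geq 2\alpha_{MN}$), hence the enclosing $P_u$-interval $[P_u[j_u],P_u[j_u+1]]$ has length $>2d_G(u,v)$, with $P_u[j_u]\leq\mathcal D_\epsilon[j_{\min}]$; symmetrically for $v$. Now simply run the Thorup--Zwick query $\dist_k(u,v,0)$ to termination at level $j$. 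Before termination each even step increases $d_G(u,p_{\cdot}(u))$ by at most $2d_G(u,v)$, so these pivot distances (which lie in $P_u$) can never jump across a $P_u$-gap of length $>2d_G(u,v)$; hence $d_G(u,p_j(u))\leq P_u[j_u]$. The analogous odd-step argument, plus one more triangle inequality, gives $d_G(v,p_j(u))\leq P_v[j_v]$. Then
\[
  d_G(u,v)\leq d_G(u,p_j(u))+d_G(v,p_j(u))\leq P_u[j_u]+P_v[j_v]\leq 2\mathcal D_\epsilon[j_{\min}],
\]
with no case analysis and no dependence on $\epsilon$ in the contradiction.
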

\begin{proof}
We have $d_G(u,v)\leq\mathcal D_\epsilon[j]\leq (1 + \epsilon)\alpha_{MN}k\cdot d_G(u,v)$.
For each index $j' > 0$ of $\mathcal D_\epsilon$, $\mathcal D_\epsilon[j'-1] = \mathcal D_\epsilon[j']/(1 + \epsilon)$ by
Lemma~\ref{Lem:EpsilonComb} and part $2$ of Lemma~\ref{Lem:Comb}. Thus,
\[
  \mathcal D_\epsilon[j_{\min}] = \frac{\mathcal D_\epsilon[j]}{(1 + \epsilon)^{j - j_{\min}}}
                             \leq \frac{(1 + \epsilon)\alpha_{MN}k}{(1 + \epsilon)^{\log(2\alpha_{MN})/\log(1 + \epsilon)}}d_G(u,v)
                             = \frac {(1 + \epsilon)k}2 d_G(u,v),
\]
showing the second inequality of the lemma.

To show the first inequality, let $I\in\mathcal I_u$ be the interval containing $\mathcal D_\epsilon[j]$. Then it follows from
Lemma~\ref{Lem:tau} that $\mathcal D_\epsilon[j']\in I$ for every $j'$ satisfying the condition in the lemma.
Recalling our assumption $\epsilon\leq \frac 1 2 < 1 - 1/\alpha_{MN}$, we get
$(1 + \epsilon)^{j - j_{\min}} \geq 2\alpha_{MN} > 2/(1 - \epsilon)$ so
\[
  \mathcal D_\epsilon[j] - \mathcal D_\epsilon[j_{\min}]
                           =    \left(1 - \frac 1{(1 + \epsilon)^{j - j_{\min}}}\right) \mathcal D_\epsilon[j]
                           >    \left(1 - \frac{1 - \epsilon}2\right)\mathcal D_\epsilon[j]
                           >    \frac 1 2 d_G(u,v)
\]
and since $\mathcal D_\epsilon[j], \mathcal D_\epsilon[j_{\min}]\in I$,
$I$ must have length $> \frac 1 2 d_G(u,v)$. Let $j_u$ be the index of $P_u$ such that
interval $I_u = [P_u[j_u],P_u[j_u + 1]]$ contains
$I$. Since $I$ is one of four consecutive subintervals of $I_u$ of even length, $I_u$ has length $> 2d_G(u,v)$.
%then $I'$. Since it is of the form $I_{u,i_u}^{(c)}$, where
%$c\in\{1,2,3,4,5\}$ and $1\leq i_u < |B_u|$, and since $I_{u,i_u}^{(c)}$ has length $\frac 1 5(B_u[i_u+1] - B_u[i_u])$,
%$d_G(u,B_u[i_u+1]) - d_G(u,B_u[i_u])\geq \frac 5 2 d_G(u,v) > 2d_G(u,v)$.
Also, $P_u[j_u]\leq \mathcal D_\epsilon[j_{\min}]$.

Similarly, there is an index $j_v$ of $P_v$ such that
$I_v = [P_v[j_v],P_v[j_v + 1]]$ has length
$> 2d_G(u,v)$ and $P_v[j_v]\leq \mathcal D_\epsilon[j_{\min}]$.

%An argument similar to the one above shows that there is an interval $I'$ of the form $I_{u,i_v}^{(c')}$,
%where $c'\in\{1,2,3,4,5\}$ and $1\leq i_v < |B_v|$, and that $d_G(v,B_v[i_v+1]) - d_G(v,B_v[i_v]) > 2d_G(u,v)$
%and $d_G(v,B_v[i_v])\leq\mathcal D_\epsilon[j_{\min}]$.

Let $j$ be the final index of $\dist_k(u,v,0)$ (corresponding to a $uv$-query to the Thorup-Zwick oracle). Assume it is even
(the odd case is handled in a similar manner). Then
$d_G(u,p_{j' + 2}(u)) - d_G(u,p_{j'}(u))\leq 2d_G(u,v)$ for all even $j'\leq j - 2$ (using an observation similar to that in
the proof of Lemma~\ref{Lem:FeasibleRec}). By the above,
$P_u[j_u]\geq d_G(u,p_j(u))$. We also have
$d_G(v,p_{j' + 2}(v)) - d_G(v,p_{j'}(v))\leq 2d_G(u,v)$ for all odd $j'\leq j - 3$ so again by the above,
$P_v[j_v]\geq d_G(v,p_{j-1}(v))$. Finally, since $p_{j-1}(v)\notin B_u$,
\begin{align*}
  d_G(v,p_j(u)) & \leq d_G(u,v) + d_G(u,p_j(u))\\
                & \leq d_G(u,v) + d_G(u,p_{j-1}(v))\\
                & \leq 2d_G(u,v) + d_G(v, p_{j-1}(v)).
\end{align*}
Thus, $d_G(v,p_j(u)) - d_G(v,p_{j-1}(v))\leq 2d_G(u,v)$ and since $p_j(u)\in B_v$ we have $d_G(v,p_j(u))\in P_v$. Also,
$d_G(v,p_{j-1}(v))\in P_v$ so since $P_v[j_v]\geq d_G(v,p_{j-1}(v))$, we get $P_v[j_v]\geq d_G(v,p_j(u))$.
We can now conclude the proof with the first inequality of the lemma:
\[
  d_G(u,v) \leq d_G(u,p_j(u)) + d_G(v,p_j(u))
           \leq P_u[j_u] + P_v[j_v]
           \leq 2\mathcal D_\epsilon[j_{\min}].
\]
\end{proof}

\subsection{Running time and space}\label{subsec:TimeSpace}
We now bound the time and space of our oracle.

\paragraph{Preprocessing:} Constructing the Mendel-Naor oracle takes
$O(n^{2 + 1/k}\log n)$ time and requires $O(n^{1 + 1/k})$ space. Traversing the nodes of the trees kept by
the oracle identifies all distances in time proportional to their number which by Lemma~\ref{Lem:DMNBound} is $O(n^{1 + 1/k})$.
Sorting them to get the list representation of $\mathcal D_{MN}$ then takes $O(n^{1 + 1/k}\log n)$ time.

Forming a sorted list of the values from $\cup_{d\in\mathcal D_{MN}}\mathcal D_d$ in Lemma~\ref{Lem:EpsilonComb} can be done in
$O((|\mathcal D_{MN}|/\log(1 + \epsilon))\log n) = O(\frac 1 \epsilon n^{1 + 1/k}\log n)$ time and requires
$O(\frac 1 \epsilon n^{1 + 1/k})$ space.
Clearly, when the input to $\comb_\epsilon$ is given as a sorted list, the algorithm can be implemented to run in time linear in
the length of the list. Thus, computing a sorted list of the values of $\mathcal D_\epsilon$ can be done in
$O(\frac 1 \epsilon n^{1 + 1/k}\log n)$ time.

By the analysis of Thorup and Zwick, forming bunches $B_u$ takes $O(kmn^{1/k})$ time.
Since these bunches have total size
$O(kn^{1 + 1/k})$, sorted lists $P_u$ can be found in $O(kn^{1 + 1/k}\log n)$ time. Sets $S_u$ can be found within the
same time bound.

Forming $\mathcal D_\epsilon(u)$-sets can be done by two linear traversals of the sorted list $L$ of values from
$\mathcal D_\epsilon\cup\bigcup_{u\in V}S_u$. The first traversal visits elements in decreasing order. Whenever we encounter
a $d$ from a set $S_u$, let $d'$ be the previous visited element of $S_u$ ($d' = \infty$ if no such element exists)
and let $d''$ be the latest visited element of $\mathcal D_{\epsilon}$. If $d\leq d'' < d'$,
$d''$ is the smallest element of $\mathcal D_\epsilon$ that $\tau_{S_u}$ maps to
$\tau_{S_u}(d'') = d$ so we add $d''$ to $\mathcal D_\epsilon(u)$. Otherwise we do nothing as $\tau_{S_u}$ maps no element of
$\mathcal D_\epsilon$ to $d$. The second traversal visits elements in
increasing order. When we encounter a $d\in S_u$, let $d'$ be the predecessor of $d$ in $S_u$ ($d' = -\infty$ is no such element
exists) and let $d''$ be the latest visited
element of $\mathcal D_{\epsilon}$. Then, assuming $d'\leq d'' < d$, $d''$ is the largest element that $\tau_{S_u}$ maps
to $\tau_{S_u}(d'') = d'$ and so we add $d''$ to $\mathcal D_{\epsilon}(u)$. Together, these two traversals form
all $\mathcal D_\epsilon(u)$-sets in time $O(|\mathcal D_\epsilon| + \sum_{u\in V}|S_u|)$.

Since each element of each set $S_u$ is associated with at most two elements of $\mathcal D_{\epsilon}(u)$, we get a
space bound of $O(kn^{1 + 1/k})$ for sets $\mathcal D_\epsilon(u)$. In the two traversals, we can easily identify $i_u(d)$,
$d\in\mathcal D_\epsilon(u)$, without an asymptotic increase in time. We represent each of
these index maps as hash functions in the same way as bunches $B_u$ are represented in the Thorup-Zwick oracle. These hash functions
do not increase space.

\paragraph{Query:} To answer a $uv$-query, we need an efficient implementation of algorithm $\refine$. The while-loop consists of
$O(1/\epsilon)$ iterations. Sub-routine $\refinefurther$ can be implemented to run in constant time assuming we have
precomputed, for each $u$ and each even index $i_u\geq 2$, the even index $j$ in $0,\ldots,i_u - 2$ that maximizes $\delta_j$.
This preprocessing can easily be done in $O(kn)$ time. It then follows that $\refine$ runs in $O(1/\epsilon)$ time
and we can conclude with our second main result.
\begin{theorem}\label{Thm:ConstQuery}
For any integer $k\geq 1$ and any $0 < \epsilon\leq 1$, a $((2 + \epsilon)k)$-approximate distance oracle of $G$ of size
$O(kn^{1 + 1/k})$ and query time $O(1/\epsilon)$
can be constructed in $O(n^{2 + 1/k}\log n)$ time. For $k = O(\log n/\log\log n)$ and constant $\epsilon$, space can be
improved to $O(n^{1 + 1/k})$.
\end{theorem}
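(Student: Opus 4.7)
The plan is to assemble the generic algorithm $\refine$ (Section~\ref{subsec:Generic}) with the Mendel-Naor oracle via the data structures of Section~\ref{subsec:Preproc} and verify the three claimed bounds. Since Lemma~\ref{Lem:Generic} yields stretch $2(1+\epsilon)k-1$, I would first rescale: apply the construction with $\epsilon' \leq \epsilon/2$, so that the output stretch is at most $2(1+\epsilon/2)k-1 < (2+\epsilon)k$. All references to $\epsilon$ below refer to this rescaled value, and $1/\epsilon' = \Theta(1/\epsilon)$.

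For preprocessing, I would (i) build the Mendel-Naor oracle, (ii) traverse its ultrametric trees to extract and sort $\mathcal D_{MN}$, (iii) form the sorted list $\cup_{d\in\mathcal D_{MN}}\mathcal D_d$ and run $\comb_\epsilon$ on it to obtain $\mathcal D_\epsilon$, (iv) build bunches $B_u$ (Thorup-Zwick) together with sorted lists $P_u$ and sets $S_u$, and (v) execute the two-pass sweep over the merged sorted list of $\mathcal D_\epsilon \cup \bigcup_u S_u$ to populate the per-vertex sets $\mathcal D_\epsilon(u)$ along with the hash $d\mapsto i_u(d)$. The precomputation of the maximizing $j$ for each even $i_u\geq 2$ (needed by $\refinefurther$) is done in $O(kn)$ time per vertex. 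Summing gives $O(n^{2+1/k}\log n)$ from Mendel-Naor, subsuming $O(kmn^{1/k})$, $O(kn^{1+1/k}\log n)$, and $O(\tfrac 1\epsilon n^{1+1/k}\log n)$.

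For a query, I would take the Mendel-Naor answer $d_{MN}(u,v)$, snap it to $\tilde d_{uv}\in\mathcal D_\epsilon$ (precomputed association), and run $\refine_{\alpha,\epsilon}$ with $\alpha=(1+\epsilon)\alpha_{MN}$. Lemma~\ref{Lem:EpsilonComb} gives $\tilde d_{uv} \leq \alpha k\,d_G(u,v)$, so Lemma~\ref{Lem:Generic} produces the desired $(2+\epsilon)k$ estimate. To achieve $O(1/\epsilon)$ time per query, I maintain $d_u$ as a pointer into the sorted list $\mathcal D_\epsilon$ and update $i_u$ only when we cross an element of $\mathcal D_\epsilon(u)$; Lemma~\ref{Lem:tau} guarantees that between two consecutive such crossings the value of $i_u$ is stable. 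The only delicate case is when $i_u(\tilde d_{uv})$ is undefined at the start: I would scan at most $\lceil\log(2\alpha_{MN})/\log(1+\epsilon)\rceil = O(1/\epsilon)$ entries of $\mathcal D_\epsilon$ looking for a defined pointer, and if none is found Lemma~\ref{Lem:NoPointer} lets me simply return $2\mathcal D_\epsilon[j_{\min}]$, which already lies in $[d_G(u,v), (1+\epsilon)k\,d_G(u,v)]$. Thus the whole query is $O(1/\epsilon)$ and the total space is $O(kn^{1+1/k} + \tfrac 1\epsilon n^{1+1/k}) = O(kn^{1+1/k})$ under the standing assumption $1/\log(1+\epsilon)=o(\log k)$.

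The main obstacle is the second claim: when $k=O(\log n/\log\log n)$ and $\epsilon=\Theta(1)$, the $\epsilon$-comb contribution collapses to $O(n^{1+1/k})$, but the bunches and the per-vertex sets $\mathcal D_\epsilon(u)$ still carry a $k$-factor. To squeeze this out I would exploit that in this regime $n^{1/k}=\Omega(\log n)\gg k$: each set $\mathcal D_\epsilon(u)$ is, by the two-pass construction, essentially a sparse sample of $\mathcal D_\epsilon$ aligned to the $O(|P_u|)$ breakpoints of $P_u$, so I would store a single shared hash from $\mathcal D_\epsilon$ to vertex-level pairs rather than duplicating pointers per vertex, and represent the $i_u(d)$-values implicitly via the Thorup-Zwick bunch hash (which is already $O(n^{1+1/k})$ total). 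The remaining $k$-factor in the raw bunch size $\sum_u|B_u|=O(kn^{1+1/k})$ is absorbed by re-parameterizing Thorup-Zwick with a slightly smaller $k'=\Theta(k)$ — feasible because for constant $\epsilon$ the stretch is only defined up to a constant factor, and because $n^{1/k'}=\Theta(n^{1/k})$ throughout this range of $k$. Verifying that this re-parameterization is consistent with the hand-off from Mendel-Naor (whose stretch is measured in the same $k$) is where I expect the most careful bookkeeping to be required.
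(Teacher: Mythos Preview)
Your argument for the first part is essentially the paper's own: rescale $\epsilon$, build Mendel--Naor plus bunches plus the $\epsilon$-comb, and invoke Lemma~\ref{Lem:NoPointer} as the escape hatch when no pointer is found during the initial scan. (Minor slip: the $\delta_j$-maximizer table costs $O(k)$ per vertex, hence $O(kn)$ total, not $O(kn)$ per vertex.)

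The second part, however, does not go through as written. You correctly sense that the fix is to re-parameterize with a different $k'=\Theta(k)$, but you have the direction reversed: a \emph{smaller} $k'$ makes the bunch space $O(k'n^{1+1/k'})$ strictly worse, since $n^{1/k'}$ grows. The paper instead applies the entire first part with a \emph{larger} $k' = k(1+\epsilon_2)$ and a smaller $\epsilon_1$ chosen so that $(2+\epsilon_1)k' = (2+\epsilon)k$; the space of the resulting oracle is $O(k'n^{1+1/k'})$, and one checks that $k'n^{1/k'}\le n^{1/k}$ exactly when $k\log k \le \epsilon_3\log n$ for a suitable constant $\epsilon_3$, i.e.\ when $k=O(\log n/\log\log n)$. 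Your stated justification ``$n^{1/k'}=\Theta(n^{1/k})$'' is actually the opposite of what is needed --- the ratio $n^{1/k}/n^{1/k'}$ must be at least $k'$, which is unbounded in this regime --- and the shared-hash / implicit-$i_u(d)$ detour is neither necessary nor clearly sound: the $\mathcal D_\epsilon(u)$ sets are already bounded by $O(|S_u|)=O(|B_u|)$, so once the bunch size is absorbed by the re-parameterization, nothing further is required.
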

\begin{proof}
We may assume that $k\geq 4$ since otherwise we can apply the Thorup-Zwick oracle or our $O(\log k)$ query time oracle.
Apply Lemma~\ref{Lem:Generic} and Lemma~\ref{Lem:NoPointer} with $\epsilon' = \frac 1 2\epsilon\leq\frac 1 2$ instead of $\epsilon$.
Then we get stretch $(2 + \epsilon)k$, size $O(kn^{1 + 1/k})$, and query time $O(1/\epsilon)$.
This shows the first part of the theorem.

To show the second part, apply the first part with $\epsilon_1 = \frac 1 2 \epsilon$ instead of $\epsilon$ and $k' = k(1 + \epsilon_2)$
instead of $k$, where $\epsilon_2 = \epsilon/(4 + \epsilon)$ (we assume here for simplicity that $k(1 + \epsilon_2)$ is an integer).
Then $(2 + \epsilon_1)k' = (2 + \epsilon)k$ so we get the desired stretch. Size is $O(k'n^{1 + 1/k'}) = O(kn^{1 + 1/k'})$.
Letting $\epsilon_3 = \epsilon_2/(1 + \epsilon_2)$, we have $1/k' = (1 - \epsilon_3)/k$ so we get size $O(n^{1 + 1/k})$ if
$kn^{-\epsilon_3/k}\leq 1$, i.e., if $k\log k\leq \epsilon_3\log n$. The latter holds when $k = O(\log n/\log\log n)$.
\end{proof}

\subsection{Faster preprocessing}\label{subsec:FasterPreproc}
In this subsection, we show how to improve the $O(n^{2 + 1/k}\log n)$ preprocessing bound in
Theorem~\ref{Thm:ConstQuery}. First, we can replace the Mendel-Naor oracle with that of Mendel and Schwob~\cite{CKR}.
This follows since the latter also uses ultrametric representations of approximate shortest path distances so the proof of
Lemma~\ref{Lem:DMNBound} still holds. This modification alone gives a preprocessing bound of $O(mn^{1/k}\log^3n)$.

Next, observe that our result holds for any $O(k)$-approximate distance $d_{MN}(u,v)$ output
and not just for $\alpha_{MN} = 128$. More
precisely, let $C > 1$ be an integer. If $d_{MN}(u,v)$ has stretch $Ck$ then it follows from our analysis that
this estimate can be refined to $(2 + \epsilon)k$ in $O(\log C/\epsilon)$ iterations and we get preprocessing
time $O(mn^{1/(Ck)}\log^3n)$ and query time $O(\log C/\epsilon)$.
In addition to this, we need to construct bunches and form sorted lists $P_u$. As shown earlier, this can be done in
$O(kmn^{1/k} + kn^{1 + 1/k}\log n)$ time. Combining this with the above gives the following improvement in
preprocessing over that in Theorem~\ref{Thm:ConstQuery}.
\begin{theorem}
For any integers $k\geq 3$ and $C\geq 2$ and any $0 < \epsilon\leq 1$, a $((2 + \epsilon)k)$-approximate
distance oracle of $G$ of size $O(kn^{1 + 1/k})$ and query time $O(\log C/\epsilon)$
can be constructed in $O(kmn^{1/k} + kn^{1 + 1/k}\log n + mn^{1/(Ck)}\log^3n)$
time. For $k = O(\log n/\log\log n)$ and constant $\epsilon$, space can be improved to $O(n^{1 + 1/k})$.
\end{theorem}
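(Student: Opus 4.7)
The plan is to follow the same architecture as Theorem~\ref{Thm:ConstQuery}, with two changes: swap in the Mendel-Schwob oracle in place of Mendel-Naor, and instantiate it with parameter $Ck$ in place of $k$. First I would invoke~\cite{CKR} to obtain an ultrametric-based oracle of stretch $128Ck$, size $O(n^{1+1/(Ck)})$, and preprocessing time $O(mn^{1/(Ck)}\log^3 n)$. Because the Mendel-Schwob oracle represents its distance estimates as node labels of stored ultrametric trees exactly as Mendel-Naor does, the proof of Lemma~\ref{Lem:DMNBound} carries over verbatim and gives $|\mathcal D_{MN}| = O(n^{1+1/(Ck)})$.

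Next I would rerun the generic refinement of Section~\ref{subsec:Generic} with $\alpha = 128C$ in place of $\alpha = 128$. Inspection of $\refine$ shows that $\alpha$ enters the algorithm only through the while-loop bound $\lceil \log(2\alpha)/\log(1+\epsilon)\rceil$. Lemma~\ref{Lem:Generic} therefore still gives stretch $2(1+\epsilon)k-1$, while the query time becomes $O(\log(128C)/\log(1+\epsilon)) = O(\log C/\epsilon)$ (using $C\geq 2$ to absorb the additive $\log 128$ into $O(\log C)$). The bounds in Lemmas~\ref{Lem:EpsilonComb} and~\ref{Lem:NoPointer} depend on $\alpha_{MN}$ in an analogous way, so both carry over with $128C$ replacing $128$; the stretch arguments in Sections~\ref{subsec:Preproc}--\ref{subsec:Query} are not disturbed, since they use only that $\alpha_{MN}$ is a constant in the stretch factor of the input estimate and that $\epsilon \leq 1/2 < 1 - 1/\alpha_{MN}$, which still holds.

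Adding the three preprocessing contributions---Thorup-Zwick bunches in $O(kmn^{1/k})$ time, $\epsilon$-comb and $P_u$-list machinery in $O(kn^{1+1/k}\log n)$ time, and the Mendel-Schwob oracle in $O(mn^{1/(Ck)}\log^3 n)$ time---yields the claimed preprocessing bound. Space is dominated by the Thorup-Zwick component at $O(kn^{1+1/k})$, since the Mendel-Schwob component contributes only $O(n^{1+1/(Ck)})$ and the $\epsilon$-comb $\mathcal D_\epsilon$ contributes $O(n^{1+1/(Ck)}\log C/\epsilon)$, both subsumed under the standing assumption $1/\epsilon = o(\log k)$. The stretch reduction from $2(1+\epsilon)k-1$ to $(2+\epsilon)k$ and the size improvement to $O(n^{1+1/k})$ for $k = O(\log n/\log\log n)$ are then obtained by the same $\epsilon$- and $k$-rescaling trick used at the end of the proof of Theorem~\ref{Thm:ConstQuery}. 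I expect no real obstacle; the only step that deserves care is tracing every occurrence of $\alpha_{MN}$ through Sections~\ref{subsec:Generic}--\ref{subsec:Query} to confirm that the asymptotic bounds still hold after replacing the constant $128$ with $128C$.
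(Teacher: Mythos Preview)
Your proposal is correct and follows essentially the same approach as the paper: replace Mendel--Naor by Mendel--Schwob (so that Lemma~\ref{Lem:DMNBound} still applies via the ultrametric representation), instantiate the black-box oracle with parameter $Ck$ to get preprocessing $O(mn^{1/(Ck)}\log^3 n)$ and stretch $O(Ck)$, and then observe that $\refine$ needs only $O(\log C/\epsilon)$ iterations when $\alpha = \Theta(C)$. Your write-up is in fact more explicit than the paper's brief Section~\ref{subsec:FasterPreproc}, which does not spell out the trace of $\alpha_{MN}$ through Lemmas~\ref{Lem:EpsilonComb} and~\ref{Lem:NoPointer} or the bookkeeping for the $\mathcal D_\epsilon$ space term.
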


\section{Concluding Remarks}\label{sec:ConclRem}
We gave a size $O(kn^{1 + 1/k})$ oracle with $O(\log k)$ query time for stretch $(2k - 1)$-distances, improving the $O(k)$ query time
of Thorup and Zwick. Furthermore, for any positive $\epsilon\leq 1$, we gave an oracle with stretch $(2 + \epsilon)k$ which answers
distance queries in $O(1/\epsilon)$ time. This improves the result of Mendel and Naor which answers stretch
$128k$-distances in $O(1)$ time.

For the first oracle, can we go beyond the $O(\log k)$ query bound? And can space be improved to $O(n^{1 + 1/k})$?
For the second oracle, can stretch be improved to $2k - 1$ while
keeping $O(1)$ query time? To our knowledge, the oracle of Mendel and Naor cannot be used to produce approximate shortest
paths, only distances. Our second oracle then has the same drawback (due to Lemma~\ref{Lem:NoPointer}). What can be done to deal
with this?

\section*{Acknowledgments}
I thank anonymous referees for their comments and remarks that helped improve
the presentation of the paper.

%\section{A note to reviewers}
%For an earlier (rejected) submission of this paper, I received comments from %reviewers to improve the presentation. My apologies
%if you are one of them; I did not have time to make all the suggested changes %before the SODA deadline but will do so should the
%paper get accepted.
\end{document}